\theoremstyle{plain}\newtheorem{teo}{Theorem}
\theoremstyle{plain}\newtheorem{proposition}[teo]{Proposition}
\theoremstyle{plain}
\theoremstyle{plain}
\theoremstyle{definition}\newtheorem{definition}[teo]{Definition}
\theoremstyle{definition}\newtheorem{assumption}[teo]{Assumption}
\theoremstyle{remark}\newtheorem{remark}[teo]{Remark}
\theoremstyle{definition}\newtheorem{example}[teo]{Example}
\numberwithin{teo}{section} 
\numberwithin{equation}{section}
\newcounter{mycomment}
\DeclareMathOperator{\sgn}{sgn}
\newcommand{\de}{\mathrm{d}}
\title{Global models of collapsing scalar field: endstate}
\author[1]{Dario Corona\thanks{dario.corona@unicam.it}}
\author[2]{Roberto Giamb\`o\thanks{roberto.giambo@unicam.it}}
\affil[1,2]{School of Science and Technology, University of Camerino, Camerino (MC), Italy}
\affil[2]{Istituto Nazionale di Fisica Nucleare, Sez. di Perugia, Perugia, Italy}
\affil[2]{IAPS, Istituto Nazionale di Astrofisica - Tor Vergata, Roma, Italy}
\begin{document}
\maketitle

\abstract{The study of dynamic singularity formation in spacetime, focusing on scalar field collapse models, is analysed.
We revisit key findings regarding open spatial topologies, concentrating on minimal conditions necessary for singularity and apparent horizon formation.
Moreover, we examine the stability of initial data in the dynamical system governed by Einstein's equations, considering variations in parameters that influence naked singularity formation.
We illustrate how these results apply to a family of scalar field models, concluding with a discussion on the concept of genericity in singularity studies.
}

\section{Introduction}
Starting from the middle of the last century, Hawking and Penrose's theorems on geodesic incompleteness and the emergence of black hole candidates sparked interest in dynamic singularity formation.
As interest in spacetime dynamics grew, it became evident that Schwarzschild or Kerr spacetimes could merely represent an asymptotic state, necessitating the consideration of richer solutions to understand relativistic collapse.
Because of their more manageable geometry, researchers turned to spherically symmetric models, particularly focusing on collapse in dust clouds, radiation, and scalar fields. 

Research on the collapse of scalar fields began in the late 1960s, and the mathematical problem was approached in a variety of ways, notably in a renowned series of papers by D. Christodoulou \cite{Christodoulou:1991yfa,https://doi.org/10.1002/cpa.3160460803,Christodoulou:1994hg,ab6b0194-3818-3102-ab03-e8d850e0918d}.
Christodoulou's work laid the foundation for the theories by proving conditions leading to trapped surfaces and addressing instability with respect to initial data, which advanced the understanding of scalar field collapse and cosmic censorship.

Several years later, a new avenue emerged, spurred by connections to extended gravity theories and string theory.
Interest grew in self-interacting models with a potential: indeed, the works cited above dealt at that point only with the free massless case, and refinements where potentials come into play were largely limited to the analysis of non-central singularities \cite{Dafermos:2004ws}.
On the other hand, the new line of research, often within a cosmological context, studies \textit{homogeneous} models with a Robertson--Walker (RW) spacetime background, which simplifies equations into an ODE system. 

A wide array of potentials $V(\phi)$ has been explored, investigating conditions leading to future singularities in a finite amount of RW time.
One of the first and probably most important results in this realm is the work by Scott Foster \cite{Foster:1998sk}, which introduced methods and techniques that influenced all the following advances on this topic.
For further discussion, see 
\cite{Miritzis:2003ym,Miritzis:2003eu,Miritzis:2005hg} and therein references, including those in the present paper. 

At the same time, the picture emerged where, under suitable conditions, the formation of the apparent horizon in this spacetime is such that one can build a  model by gluing a RW interior with an external spherical metric.
In this way, one obtains a global model where, starting from regular initial data, the apparent horizon does not form.
In analogy to the example shown in \cite{Christodoulou:1994hg} for non-homogeneous collapse, this example of naked singularity was examined to understand its \textit{genericity} in terms of tunable parameters.
By genericity we mean here the stability with respect to small perturbations of the parameters leading to a certain feature, e.g., the appearance of a naked singularity in the modality described above.
This research intersected with efforts to identify quantum mechanisms for preventing singularity formation (see, e.g., the review \cite{Malafarina:2017csn}). 

In this paper, we revisit the main results concerning open spatial topologies (specifically, when $\kappa=0$ or $\kappa = -1$)
with the goal of employing the minimal necessary assumptions on the potential function $V(\phi)$.
While our focus is on elucidating the arguments underpinning their proofs, it is important to note that our scope does not encompass the vast literature on the subject.
For instance, \cite{Tzanni:2014eja,Giambo:2014jfa,Giambo:2015tja} delve into potentials $V(\phi)$ taking negative values on non-compact subsets of $\mathbb{R}$.
Conditions on $V(\phi)$ leading to generic singularity formation and to generic formation of the apparent horizon will be derived. 

The genericity results presented here refer to the stability with respect to the choice of the initial data for the dynamical system ruling these models, essentially derived from Einstein field equations where gravity is coupled with the scalar field self--interacting with $V(\phi)$. 
In fact,  one can imagine an analysis of these models where a varying parameter affects the functional dependence of a physical quantity -- e.g. the energy density -- on the scale factor of RW spacetime.
In this way, an alternative and different notion of genericity is obtained, which can lead to a completely different interpretation: in \cite{Goswami:2005fu,Baier:2014ita,Mosani:2021czj} such situations are shown, where naked singularity formation is stable with respect to the choice of the parameter. 
The effect of altering this parameter is to simultaneously modify both the function $V(\phi)$ and the initial data,
thus obtaining a family of scalar field star models collapsing to a naked singularity.
The theory we are discussing naturally applies to each member of this family, whose naked singularity formation is unstable to a perturbation of the initial data if the potential is held fixed. Noticeably enough, however, these naked singularities have been proved with numerical methods to be gravitationally strong \cite{Guo:2020ked}. 

The paper is organized as follows. Section \ref{sec:the_scalar_field_collapsing_model} presents the interior RW metric and the equations ruling its dynamics. The asymptotic behavior of the variables in the open topologies  $\kappa=0$ and $\kappa=-1$ are  studied in Sections \ref{sec:the_flat_case_kappa_0_} and \ref{sec:collapse_of_the_open_spatial_topology_kappa_1_}, respectively.
The construction of the global model is then given in Section \ref{sec:global_model_the_exterior}, together with the main theorems on the genericity of singularity and horizon formation. The different notions of genericity, sketched above, are analyzed in full details in \ref{sec:horizon_formation_genericity}, referencing to the above cited examples from the literature. Section \ref{sec:outro} is devoted to the conclusions.

\section{The scalar field collapsing model}%
\label{sec:the_scalar_field_collapsing_model}
First of all, let us work in normalized units in such a way that the Einstein field equation will read as follows: 
\[
G+\Lambda g=T.
\]
In order to build a global model, we will consider an interior matter
where the energy momentum tensor is given by
\begin{equation}
	\label{eq:momentumTensor}
	T_{\mu\nu} =
	\phi_{,\mu}\phi_{,\nu}
	- \left(
		\frac{1}{2}g^{\alpha\beta}\phi_{,\alpha}\phi_{,\beta}
		+ V(\phi)
	\right)g_{\mu\nu},
\end{equation}
where $V(\phi)$ is a given potential that depends on a scalar field $\phi$, a function defined on the spacetime interior manifold.
We immediately observe that the potential $V(\phi)$ can also embody the contribution of the cosmological constant $\Lambda$, via its redefinition $V\longrightarrow V+\Lambda$.

We suppose that the gradient of this function is timelike,
allowing us to choose a comoving gauge
such that the metric is given by a homogeneous and spherical RW model:
\begin{equation}
	\label{eq:RWmodel}
	g = -\de t^2
	+ a^2(t)\left[\frac{\de r^2}{1 - \kappa r^2} + r^2\de\Omega^2\right],
\end{equation}
where $\de\Omega^2$ is the line element of $S^2$ embedded
in the euclidean space $\mathbb{R}^3$,
$a(t)$ is the scale factor of the homogeneous interior,
and $\kappa$ is a  parameter related to the 
(constant) curvature of the spatial part.
As is well known, relevant cases are given by $\kappa = -1,0,+1$.
The Einstein field equations then become:
\begin{subequations}
	\begin{align}
		G_0^0 = T_0^0\quad \implies \quad
		& -\frac{3(\kappa + \dot{a}^2(t))}{a^2(t)}
		= - \left(\frac{1}{2}\dot{\phi}^2(t)+ V(\phi(t))\right),
		\label{eq:Einstein-0}\\
		G_i^i = T_i^i\quad\implies\quad
		&
		- \frac{\kappa+\dot{a}^2(t) + 2 a(t)\ddot{a}(t)}{a^2(t)}
		= \frac{1}{2}\dot{\phi}^2(t) - V(\phi(t)),
		\label{eq:Einstein-i}
	\end{align}
\end{subequations}
which implies the Bianchi identity $T^{\mu\nu}_{\ \ ;\nu} = 0$,
now expressed as follows:
\begin{equation}
	\label{eq:Bianchi-here}
	\dot{\phi}(t)
	\left(
		\ddot{\phi}(t) + V'(\phi(t))
		+ 3\frac{\dot{a}(t)}{a(t)}\dot{\phi}(t)
	\right) = 0.
\end{equation}
By~\eqref{eq:Einstein-0},
solutions such that $\dot{\phi}(t) \equiv 0$ correspond to an energy momentum tensor given by 
$T^\mu_\nu = - V(\phi_0)\delta^\mu_\nu$,
hence to vacuum solutions with cosmological constant
$\Lambda = V(\phi_0)$,
i.e., to (anti-) de Sitter solutions.
Apart from this particular case, the scalar field $\phi$
is a solution of Klein-Gordon equation:
\begin{equation}
	\label{eq:Klein-Gordon}
	\Box\phi + V'(\phi) = \ddot{\phi}(t) + V'(\phi(t))
	+ 3 \frac{\dot{a}(t)}{a(t)}\dot{\phi}(t) = 0,
\end{equation}
where $\Box$ is the D'Alambert operator induced by $g$.

\begin{remark}\label{rem:Rem1}
	To gain further insight on the interplay
	between~\eqref{eq:Einstein-0}-\eqref{eq:Einstein-i}
	and~\eqref{eq:Klein-Gordon},
	let us introduce the Hubble function
	\begin{equation*}
		h(t) = \frac{\dot{a}(t)}{a(t)},
	\end{equation*}
	and consider the function
	\begin{equation}
		\label{eq:W-def}
		W(t) \coloneqq
		h^2(t) + \frac{\kappa}{a^2(t)}
		- \frac{1}{3}\left(\frac{1}{2}\dot{\phi}^2(t) + V(\phi(t))\right).
	\end{equation}
	Of course, $W(t) = 0$ is precisely~\eqref{eq:Einstein-0}.
	But observe that, using~\eqref{eq:Einstein-i} and~\eqref{eq:Klein-Gordon},
	we have
	\begin{equation}
		\label{eq:KleinGordonODE}
		\dot{W}(t) = -3h(t)W(t),
	\end{equation}
	i.e., $W(t) = W_0a^{-3}(t)$,
	so that, assuming~\eqref{eq:Einstein-i} and~\eqref{eq:Klein-Gordon},
	together with a choice of initial data such that $W_0 = 0$,
	then $W(t) = 0$ for all $t$.
\end{remark}
\begin{remark}
	\label{rem:Rem2}
	Using~\eqref{eq:Klein-Gordon}
	along with~\eqref{eq:Einstein-0}-\eqref{eq:Einstein-i},
	we derive
	\begin{equation}
		\label{eq:h-ODE}
		\dot{h}(t) = \frac{\kappa}{a^2(t)}-\frac{1}{2}\dot{\phi}^2(t),
	\end{equation}
	and therefore in the case of open spatial topologies,
        i.e., $\kappa=0$ and $\kappa=-1$,
	we can conclude that $h(t)$ is a decreasing function
	throughout the evolution.
    This fact motivates the interest, in this paper, for spherical models with $\kappa=0$ or $\kappa=-1$. 
	As a consequence, since we are interested in collapsing solutions,
	we choose the initial state such that $\dot{a}(0) < 0$,
	which implies $h(t) < 0$ for all $t \ge 0$.
\end{remark}

Using the above remark, we will employ a normalization 
scheme widely used in literature, introducing the new variables $x,w$ and $z$ defined as follows:
\begin{equation}
	\label{eq:def-xwz}
	x = -\frac{1}{\sqrt{3}\, h}, \quad
	w = -\frac{\dot{\phi}}{\sqrt{6}\, h},\quad
	z = -\frac{1}{h a} \left(= -\frac{1}{\dot{a}}\right).
\end{equation}
Moreover, we employ a normalized line $\tau$:
\begin{equation}
	\label{eq:def-tau}
	\de \tau = -\sqrt{6}\, h\, \de t.
\end{equation}
Using~\eqref{eq:Einstein-i}, \eqref{eq:Klein-Gordon} and \eqref{eq:h-ODE}
we get the system:
\begin{subequations}
	\begin{align}
		&\frac{\de\phi}{\de\tau} = w,
		\label{eq:mainSys-a}\\
		&\frac{\de x}{\de\tau}
		=- \frac{x}{\sqrt{6}}
		\left(3w^2 -\kappa z^2\right),
		\label{eq:mainSys-b}\\
		&\frac{\de w}{\de\tau}
		= \sqrt{\frac{3}{2}}
		w
			\left( 1 -w^2 + \frac\kappa 3 z^2 \right)
			- \frac12\, V'(\phi)x^2,
		\label{eq:mainSys-c}\\
		&\frac{\de z}{\de\tau}
		= \frac{z}{\sqrt{6}}
		\left(1-3w^2 + \kappa z^2\right).
		\label{eq:mainSys-d}
	\end{align}
\end{subequations}
As a matter of fact, the above system is not free
since, as we have seen in Remark~\ref{rem:Rem2},
we have to take into account the condition~\eqref{eq:Einstein-0},
which is guaranteed if we select initial data satisfying it.
This condition, when applied to the new set of unknowns, takes the following form,
\begin{equation}
\label{eq:Constraint}
V(\phi)x^2 + w^2 - \kappa z^2 = 1,
\end{equation}
which allows $x$ to be decoupled from the other unknowns in the system,
assuming $V(\phi) \ne 0$.
This point will be further explored in subsequent discussions.

%obtaining
%\begin{subequations} 
%	\begin{align}
%		&\frac{\de\phi}{\de\tau} = w,
%		\label{eq:mainSys2-a}\\
%		&\frac{\de w}{\de\tau}
%		= \frac{1}{\sqrt{6}}
%		\left[3u(\phi)(w^2 - \kappa z^2 - 1)
%			- w(3w^2 - \kappa z^2 - 3)
%		\right],
%		\label{eq:mainSys2-b}\\
%		&\frac{\de z}{\de\tau}
%		= - \frac{z}{\sqrt{6}}
%		\left(3w^2 - \kappa z^2 -1\right),
%		\label{eq:mainSys2-c}
%	\end{align}
%\end{subequations}

%\begin{remark}
%	\label{rem:Rem3}
%	Using~\eqref{eq:Constraint} in~\eqref{eq:mainSys-b}
%	we obtain the equation
%	\begin{equation}
%		\label{eq:new-dex}
%		\frac{\de x}{\de\tau}
%		= -\frac{x}{\sqrt{6}}(3w^2 - \kappa z^2),
%	\end{equation}
%	that can be used alternatively to
%	the constraint~\eqref{eq:Constraint}
% 	are known from~\eqref{eq:mainSys2-a}-\eqref{eq:mainSys2-c}.
%\end{remark}

If $x(0) > 0$ and $\kappa = 0,-1$,
then~\eqref{eq:mainSys-b} implies 
that $x(\tau)$ is a decreasing and positive function,
hence bounded for every $\tau \ge 0$.
%Since $V(\phi)$ is bounded from below, 
Postulating that $V(\phi)$ is bounded from below,
as as actually stipulated by Assumption~\ref{as:V},
we deduce from~\eqref{eq:Constraint}
that $w$ and $z$ will also be bounded for every $\tau\ge 0$.
The only unknown that can (and indeed does)
attain unbounded values is $\phi(\tau)$.
To control the behaviour 
at infinity of the scalar field,
we introduce a new variable $s$,
related to $\phi$ by 
\begin{equation*}
	s = f(\phi),
\end{equation*}
where $f\colon\mathbb{R}\to (-1,1)$
is a strictly increasing function of class $C^2$ such that the following two conditions hold:
\begin{equation}\label{eq:condf2}
\lim_{\phi\to-\infty}\frac{f''(\phi)}{f'(\phi)}=\lambda_->0, \qquad
\lim_{\phi\to+\infty}\frac{f''(\phi)}{f'(\phi)}=\lambda_+<0.
\end{equation}
In other words, we ``compactify'' the variable $\phi$,
casting it into the set $(-1,1)$.
With this new variable, the system reads as follows:
\begin{subequations}
	\begin{align}
		&\frac{\de s}{\de\tau} = f'(f^{-1}(s)) w,
		\label{eq:mainSys2-a}\\
		&\frac{\de x}{\de\tau}
		=- \frac{x}{\sqrt{6}}
		\left(3w^2 -\kappa z^2\right),
		\label{eq:mainSys2-b}\\
		&\frac{\de w}{\de\tau}
		= \sqrt{\frac{3}{2}}
		w
			\left( 1 -w^2 + \frac\kappa 3 z^2 \right)
			- \frac12\, V'(f^{-1}(s))x^2,
		\label{eq:mainSys2-c}\\
		&\frac{\de z}{\de\tau}
		= \frac{z}{\sqrt{6}}
		\left(1-3w^2 + \kappa z^2\right),
		\label{eq:mainSys2-d}
	\end{align}
\end{subequations}
where now the constraint takes the form:
\begin{equation}
	\label{eq:Constraint2}
	V(f^{-1}(s))x^2 + w^2 - \kappa z^2 = 1.
\end{equation}
The main drawback of this formulation is that it cannot be extended by continuity in $s=\pm 1$,
due to the term $V'(f^{-1}(s))$ in~\eqref{eq:mainSys2-c}.
Indeed, while $f'(f^{-1}(s))$ can be extended by continuity at $s = \pm1$ by setting its value to zero, $V'(f^{-1}(s))$ may possibly diverge.
To overcome this issue, we define the function 
$u\colon (-1,1) \to \mathbb{R}$ as follows:
\begin{equation}
	\label{eq:def-u}
	%u(\phi) = \frac{V'(\phi)}{\sqrt{6}\, V(\phi)},
        u(s) = \frac{V'(f^{-1}(s))}{\sqrt{6}\, V(f^{-1}(s))}.
\end{equation}
Exploiting the constraint \eqref{eq:Constraint2},
which is invariant by the flow of \eqref{eq:mainSys2-a}--\eqref{eq:mainSys2-d},
we obtain the following dynamical system:
\begin{subequations}
	\begin{align}
		&\frac{\de s}{\de\tau} = f'(f^{-1}(s))w,
		\label{eq:mainSys3-a}\\
		&\frac{\de x}{\de\tau}
				= -\frac{x}{\sqrt{6}}(3w^2 - \kappa z^2),\label{eq:mainSys3-b}\\
		&\frac{\de w}{\de\tau}
		= \sqrt{\frac{3}{2}}
		\left[w\left(1-w^2 +\frac{\kappa}{3}z^2\right)
			- u(s)(1 - w^2 + \kappa z^2)
		\right],
		\label{eq:mainSys3-c}\\
		&\frac{\de z}{\de\tau}
		= \frac{z}{\sqrt{6}}
		\left(1 - 3w^2 + \kappa z^2 \right).
		\label{eq:mainSys3-d}
	\end{align}
\end{subequations}
%where, with an abuse of notation, 
%$u(s)$ is the function~\eqref{eq:def-u} evaluated at $\phi = f^{-1}(s)$.

To extend the meaning of this system also at infinity and, in general, to obtain the main results of this paper, we introduce the following assumption. 
\begin{assumption}\label{as:V}
    We assume  $V\colon\mathbb{R}\to\mathbb{R}$ is a function of class $\mathcal C^2$ such that:
    \begin{enumerate}
        \item 
         $V(\phi)$ eventually positive  and $V'(\phi)$ eventually negative (resp. positive) 
    for $\phi\to-\infty$ (resp.: $\phi\to+\infty$);  
%        $V(\phi)$ is eventually decreasing (resp.: increasing) and positive for $\phi\to-\infty$ (resp.: $\phi\to+\infty$);
        \item the two limits
        $\lim_{s\to-1}u(s)$ and $\lim_{s \to 1}u(s)$ exist and are finite.
    \end{enumerate}
\end{assumption}

Observe that the hypotheses made on $V(\phi)$ allow the dynamical system \eqref{eq:mainSys3-a}-\eqref{eq:mainSys3-d}  to be extended by continuity at $s=\pm 1$.
Also observe that $u(s)$ is not well defined on the zeroes of the potential $V$, which however are contained in a compact subset of $(-1,1)$.

\section{The flat case $\kappa = 0$}
\label{sec:the_flat_case_kappa_0_}

%In this section, we will review the case $\kappa = 0$,
%previously analyzed in \cite{GGM-JMP2007}.
%Our approach will employ the compactification scheme introduced earlier,
%similarly as done in \cite{Giambo:2008sa}
%for homogeneous perfect fluid collapsing in
%$f({R})$ theories.

In this section, we revisit the case where $\kappa = 0$, which was analyzed in \cite{GGM-JMP2007}, now employing the compactification scheme we previously introduced.
This approach mirrors the methodology applied in \cite{Giambo:2008sa} to the study of homogeneous perfect fluid collapse within $f(R)$ theories.

First, let us rewrite the system~\eqref{eq:mainSys3-a}-\eqref{eq:mainSys3-d} in this particular case:
\begin{subequations}
	\begin{align}
		&\frac{\de s}{\de\tau} = f'(f^{-1}(s))w,
		\label{eq:mainSys4-a}\\
		&\frac{\de x}{\de\tau} = -\sqrt{\frac{3}{2}}\,xw^2,
            \label{eq:mainSys4-b}\\
		&\frac{\de w}{\de\tau}
		= \sqrt{\frac{3}{2}}
		\big(w - u(s)\big)(1 - w^2),
		\label{eq:mainSys4-c}\\
		&\frac{\de z}{\de\tau}
		= \frac{z}{\sqrt{6}}
		\left(1 - 3w^2 \right),
		\label{eq:mainSys4-d}
	\end{align}
\end{subequations}
where now~\eqref{eq:Constraint2} reads as follows:
\begin{equation}
	\label{eq:V-kappa0}
	V(f^{-1}(s))\,x^2 + w^2 = 1.
\end{equation}
Since~\eqref{eq:mainSys4-a}--\eqref{eq:mainSys4-c} and~\eqref{eq:V-kappa0}
don't involve the unknown $z$,
only the first three equations of the system should be considered. 
Moreover,~\eqref{eq:mainSys4-a}
and~\eqref{eq:mainSys4-c} are independent from~\eqref{eq:mainSys4-b}.
However, it is important to notice that the described system is applicable only when $u(s)$ is well defined,
hence when we are sufficiently distant from the zeroes of $V(f^{-1}(s))$.
Should this not be the case,~\eqref{eq:mainSys4-c} must be replaced with the following equation:
\begin{equation}
\label{eq:mainSys4-c2}
\frac{\de w}{\de\tau}
		= \sqrt{\frac{3}{2}}
		w \left( 1 -w^2  \right)
			- \frac12\, V'(f^{-1}(s))\,x^2.
\end{equation}
To overcome this problem, let us fix $\delta > 0$ such that
\begin{equation}
    \label{eq:def-delta}
    V(f^{-1}(s)) > 0 \text{ and } V'(f^{-1}(s)) \ne 0,
\qquad \forall s \in (-1,-1+\delta) \cup (1-\delta,1),
\end{equation}
whose existence is granted by Assumption~\ref{as:V}.
As a consequence,~\eqref{eq:mainSys4-c} is well defined
whenever $|s| \in (1-\delta,1)$.
Let us choose a smooth function $\psi(s)\colon[-1,1]\to[0,1]$ 
such that $\psi(s) = 1$ if $|s| \le 1-\delta$
and $\psi(s) = 0$ if $|s| \in (1-\delta/2,1]$.
In particular, we have that $\psi(s) = 0$
in a right neighbourhood of $s = -1$
and in a left neighbourhood of $s = 1$,
hence where $u(s)$ is well defined, thanks to
Assumption~\ref{as:V}.
Setting $\theta_1(s,w)$ and $\theta_2(s,x,w)$ the right hand-sides of \eqref{eq:mainSys4-c} and~\eqref{eq:mainSys4-c2} respectively, 
we have that both~\eqref{eq:mainSys4-c} and~\eqref{eq:mainSys4-c2}
can be substituted by the following equation:
\begin{equation}\label{eq:mainSys4-c3-k-1}
\frac{\de w}{\de\tau}=(1-\psi(s))\theta_1(s,w)+\psi(s)\theta_2(s,x,w),
\end{equation}
which, with a slight abuse of notation, is always well defined.

\begin{definition}\label{def:csf}
A \textit{collapsing scalar field in the flat case} ($\kappa=0$) is a 
    solution of \eqref{eq:mainSys4-a}, \eqref{eq:mainSys4-b}, and~\eqref{eq:mainSys4-c3-k-1}
    with initial data satisfying \eqref{eq:V-kappa0}, $x(0)>0$, and $V$ satisfying Assumption \ref{as:V}.
\end{definition}

Since we are interested in the asymptotic behaviour of the system,
we give the following result,
recalling that ``generically'' refers to small perturbations of initial data.

\begin{figure}
\centering  
\subfigure[\ ]{\includegraphics[width=0.45\linewidth]{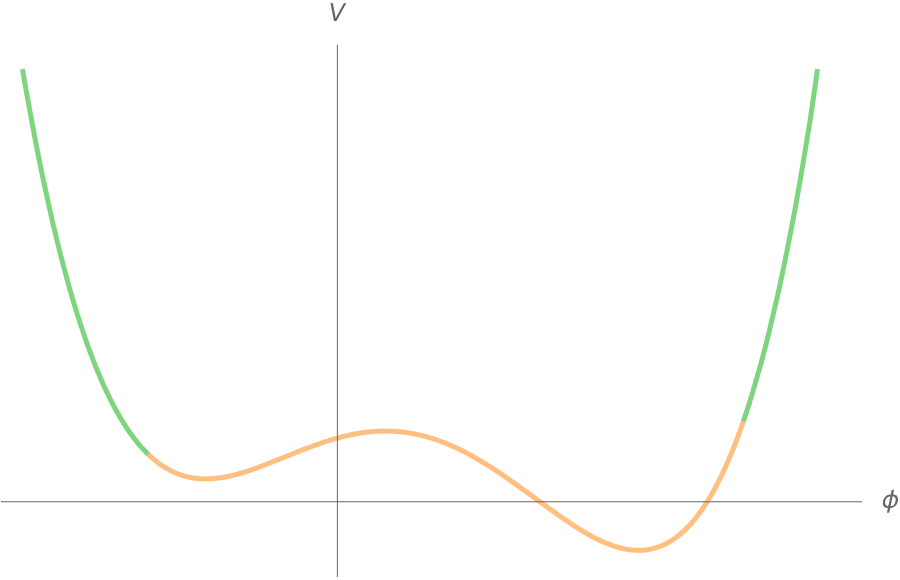}}
\subfigure[\ ]{\includegraphics[width=0.45\linewidth]{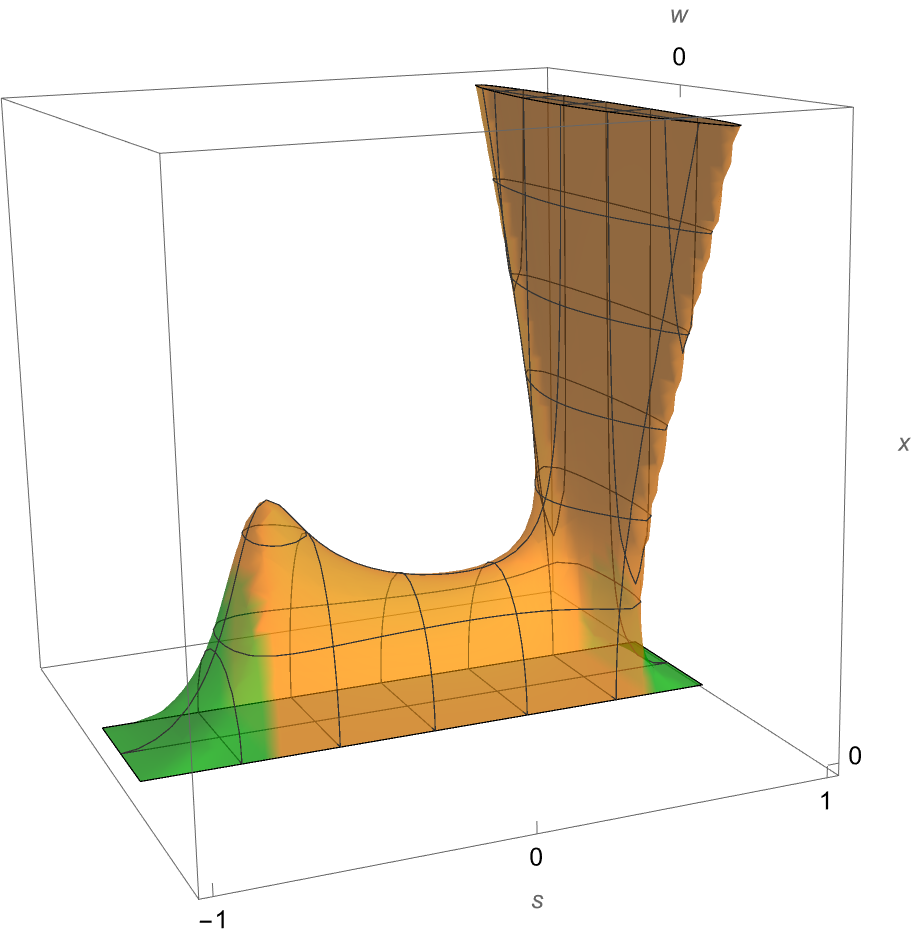}}
\subfigure[\ ]{\includegraphics[width=0.45\linewidth]{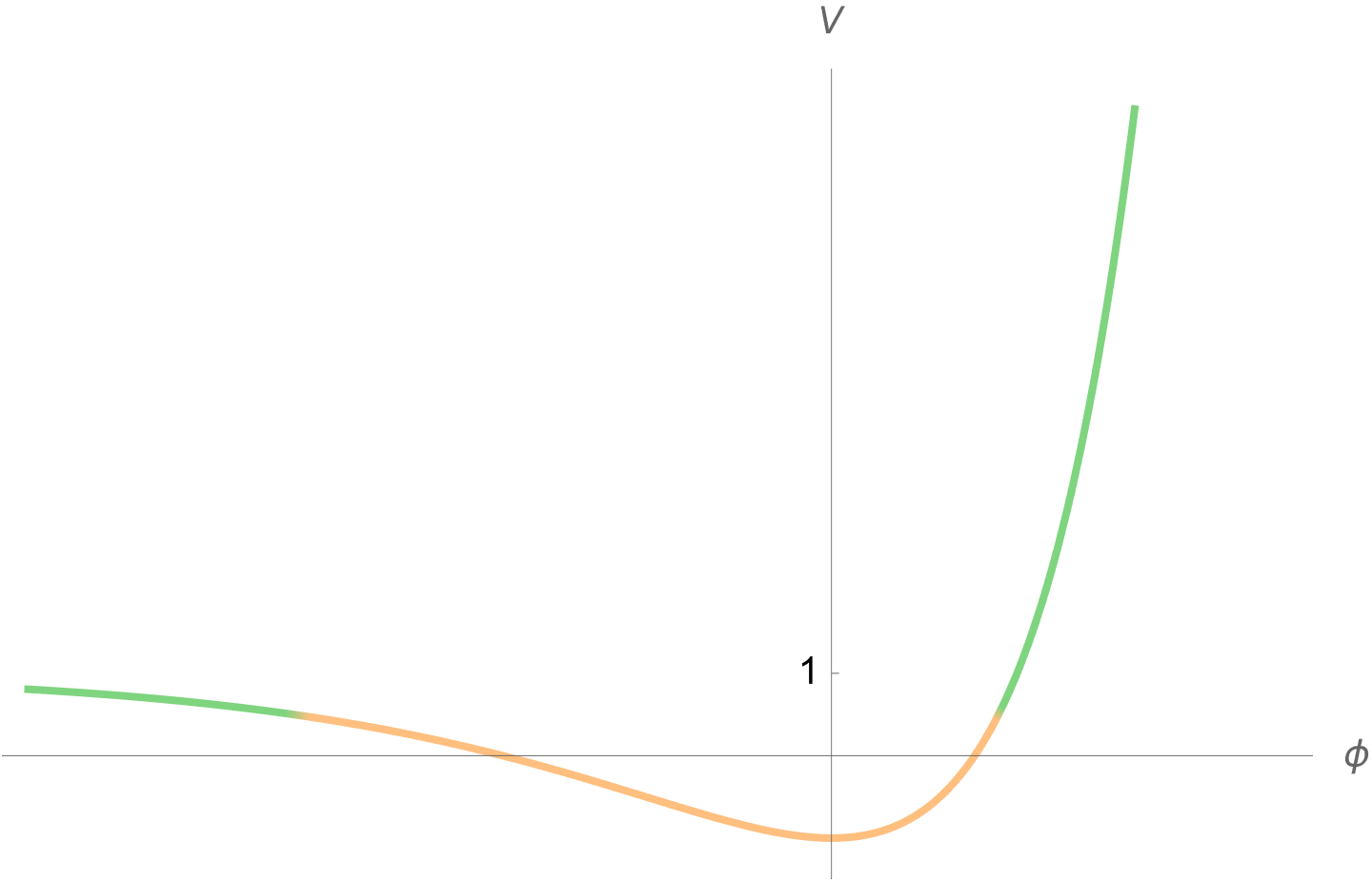}}
\subfigure[\ ]{\includegraphics[width=0.45\linewidth]{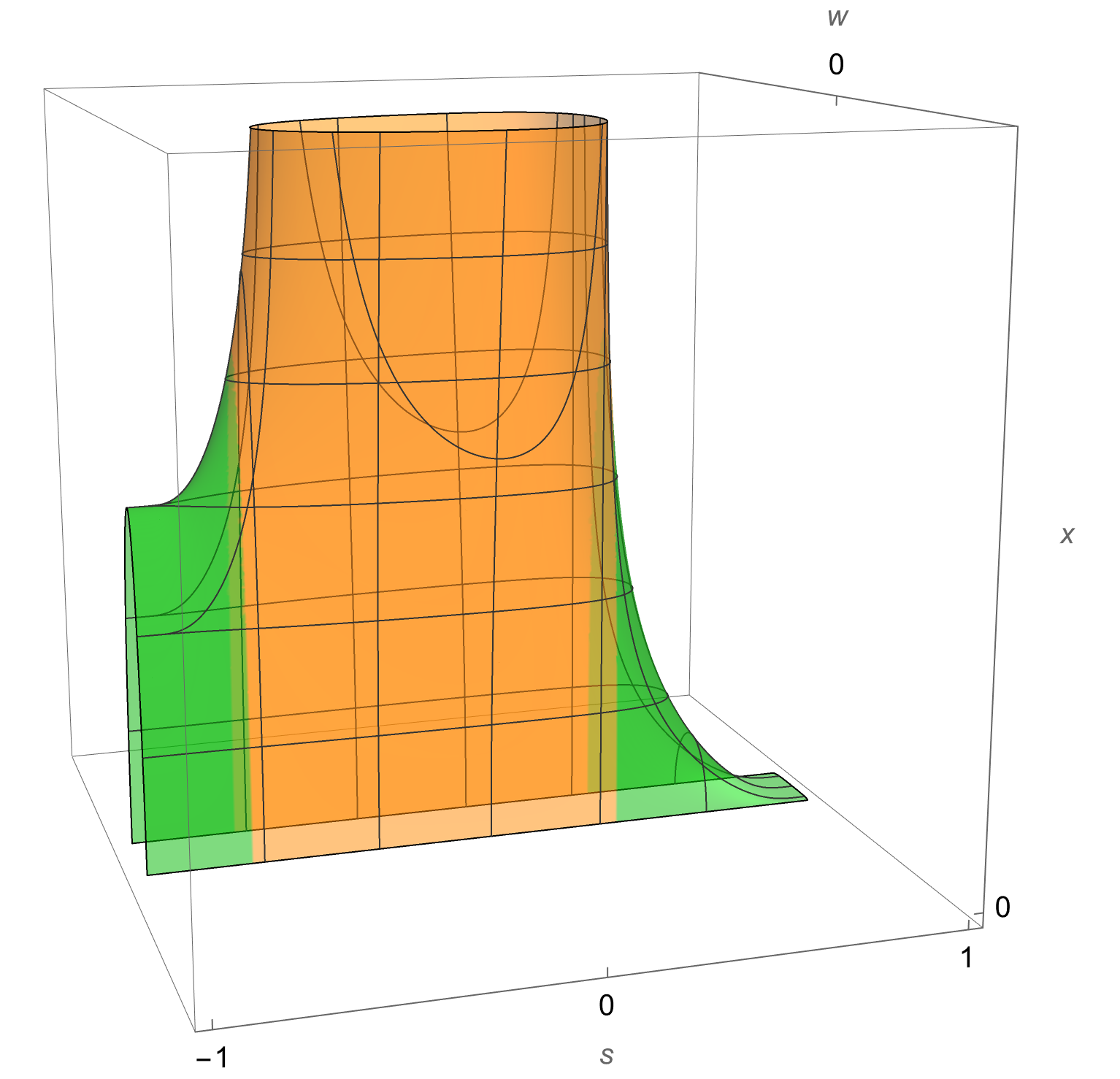}}
\subfigure[\ ]{\includegraphics[width=0.45\linewidth]{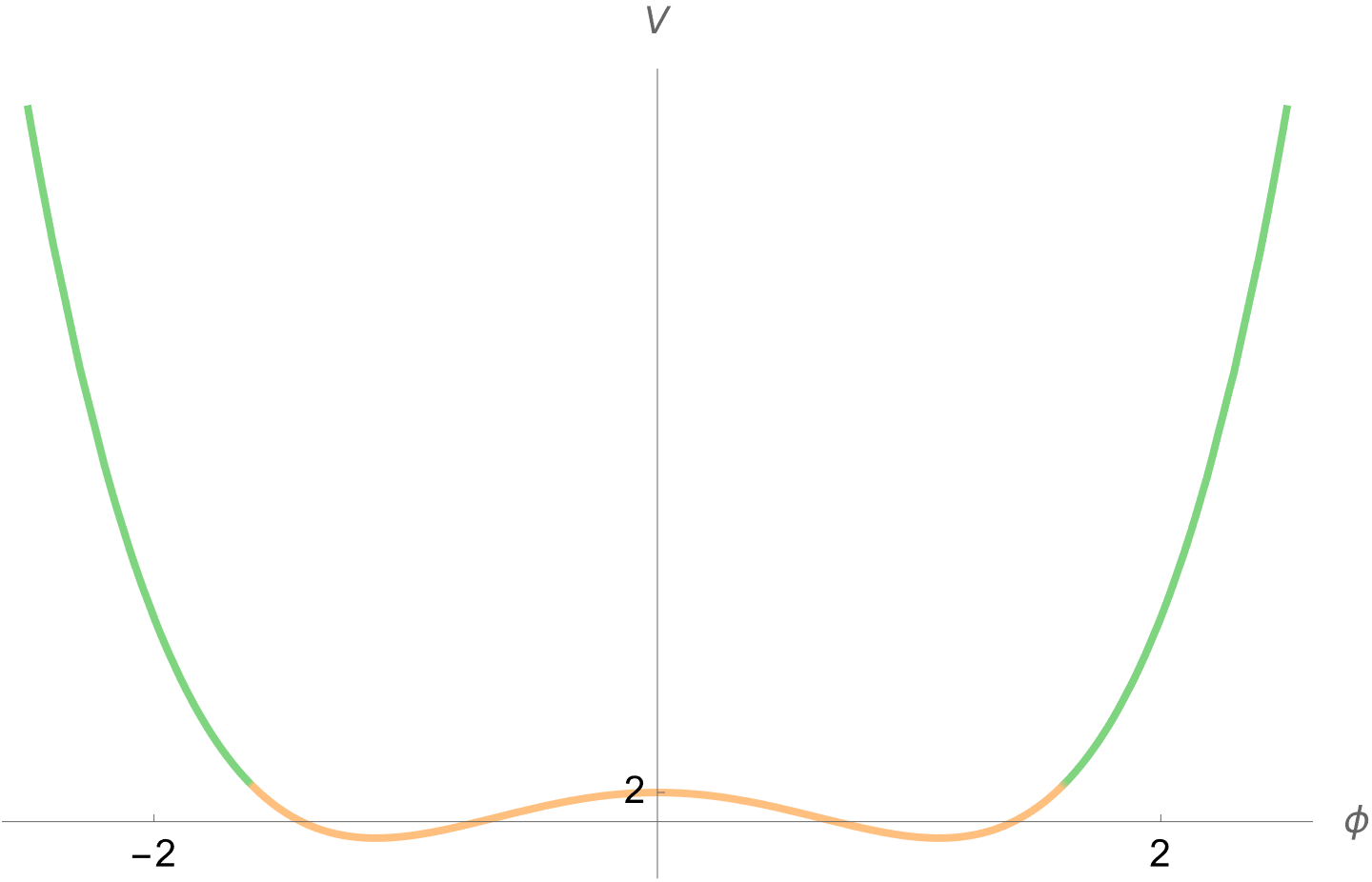}}
\subfigure[\ ]{\includegraphics[width=0.45\linewidth]{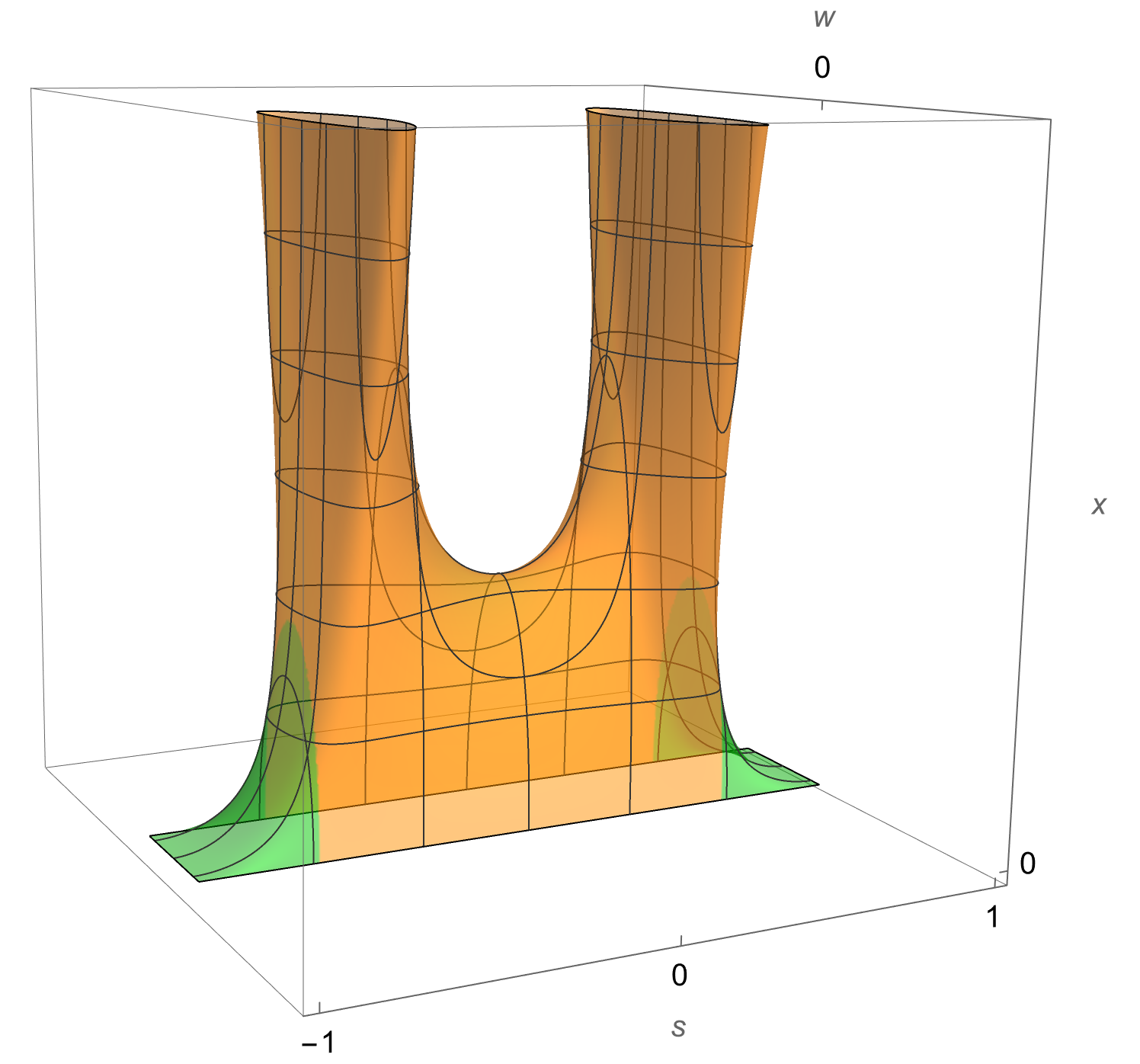}}
\caption{The subset $\Omega$ \eqref{eq:def-Omega}, for several choices of the potential $V(\phi)$. The orange region corresponds to the set where $\psi(s)=1$, see \eqref{eq:def-delta} and the ensuing discussion.}
\label{fig:M}
\end{figure}

\begin{teo}
	\label{theorem:x0}
	Let $\bar{x} > 0$ be fixed and let $\Omega$
	(see Figure \ref{fig:M}) be defined as the     following subset of $\mathbb{R}^3$:
	\begin{equation}
		\label{eq:def-Omega}
		\Omega = \Big\{
			(s,x,w) \in \mathbb{R}^3: s \in (-1,1),\, x \in (0,\bar{x}],\,
            V(f^{-1}(s))x^2 + w^2 = 1
		\Big\}.
	\end{equation}
Assume that $V$ satisfies Assumption \ref{as:V}.
 Then, a collapsing scalar field in the flat case,
	with initial data in $\Omega$, remains in $\Omega$
	for all $\tau \ge 0$,
	and generically approaches the set $M_0\subseteq \overline{\Omega}\setminus\Omega$ defined by:
	\begin{equation}
		\label{eq:boundaryOmega}
	   M_0\coloneqq\overline\Omega\cap \Big(\{|s|=1\}\cup\{|w|=1\}\Big)\cap\{x=0\}.
	\end{equation}
	In particular, we have generically that
	\begin{equation}
		\label{eq:lim-xtau}
		\lim_{\tau\to\infty}x(\tau) = 0^+.
	\end{equation}
\end{teo}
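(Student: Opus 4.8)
The plan is to read the whole problem off the single monotone quantity $x$, reducing the asymptotics to a planar analysis on the constraint surface; the only genuinely delicate point is the word \emph{generically}, which I would extract from the instability of the interior equilibria. First I would establish forward invariance of $\Omega$. Equation~\eqref{eq:mainSys4-b} is linear in $x$, so
\begin{equation*}
x(\tau)=x(0)\,\exp\!\left(-\sqrt{\tfrac32}\int_0^\tau w^2(\sigma)\,\de\sigma\right),
\end{equation*}
which is strictly positive for every finite $\tau$ and non-increasing; since $x(0)\le\bar x$ this keeps $x(\tau)\in(0,\bar x]$. The constraint~\eqref{eq:V-kappa0} is preserved by the flow (as already noted for the general constraint), and the hyperplanes $\{s=\pm1\}$ are invariant because $f'(f^{-1}(s))$ vanishes there, so by uniqueness a trajectory issued from $s\in(-1,1)$ stays in $(-1,1)$. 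Hence $\Omega$ is forward invariant and every trajectory is confined to the compact set $\overline\Omega$.

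Next I would use $x$ as a Lyapunov function. Since $\de x/\de\tau=-\sqrt{3/2}\,x\,w^2\le0$, the limit $x_\infty\coloneqq\lim_{\tau\to\infty}x(\tau)\ge0$ exists, and~\eqref{eq:lim-xtau} is exactly the statement that $x_\infty=0$ for generic data. Because $\overline\Omega$ is compact, LaSalle's invariance principle applies: the $\omega$-limit set lies in the largest invariant subset of $\{\de x/\de\tau=0\}=\{w=0\}\cup\{x=0\}$, and on it $x\equiv x_\infty$. The same strict monotonicity rules out periodic orbits, since $x$ decreases whenever $w\ne0$. If $x_\infty>0$, the $\omega$-limit set therefore lies in $\{w=0,\,x=x_\infty\}$, whose invariant points must satisfy $\de w/\de\tau=0$, i.e.\ $u(s)=0$; by~\eqref{eq:def-u} this means $V'(f^{-1}(s))=0$. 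Thus when $x_\infty>0$ the trajectory accumulates only on interior equilibria sitting over positive critical points of $V$. (Alternatively, $\int_0^\infty w^2\,\de\tau<\infty$ together with Barbalat's lemma gives $w\to0$ directly.)

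I would then show these equilibria are never sinks. On the part of $\Omega$ where $V>0$ the constraint gives $x^2=(1-w^2)/V(f^{-1}(s))$, and substituting this into~\eqref{eq:mainSys4-c2} reproduces~\eqref{eq:mainSys4-c} exactly, so near such an equilibrium $P_c=(s_c,x_\infty,0)$ the flow on the surface is the planar system with right-hand sides $f'(f^{-1}(s))w$ and $\sqrt{3/2}\,(w-u(s))(1-w^2)$. Linearising at $(s_c,0)$ gives
\begin{equation*}
J=\begin{pmatrix} 0 & f'(f^{-1}(s_c))\\ -\sqrt{\tfrac32}\,u'(s_c) & \sqrt{\tfrac32}\end{pmatrix},
\end{equation*}
whose trace is $\sqrt{3/2}>0$. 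Hence $P_c$ always has an eigenvalue with positive real part: it is a source when $u'(s_c)>0$ (equivalently $V''(\phi_c)>0$) and a saddle when $u'(s_c)<0$ (equivalently $V''(\phi_c)<0$), and even in degenerate cases the transverse direction is repelling. In every case the stable set of $P_c$ has empty interior in the two-dimensional surface $\Omega$.

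Finally I would assemble the genericity statement: by the previous two steps the initial data with $x_\infty>0$ lie in the union of the stable sets of the interior critical equilibria, a nowhere-dense (Lebesgue-null) subset of $\Omega$ whose complement is open and dense. For every datum outside it $x_\infty=0$; since the constraint forces $\overline\Omega\cap\{x=0\}$ to coincide with $M_0$ of~\eqref{eq:boundaryOmega} (on $\{x=0\}$ one has $|w|=1$ unless $|s|=1$), the trajectory approaches $M_0$ and~\eqref{eq:lim-xtau} follows. I expect the main obstacle to be exactly this last, global, step: one must verify that no competing asymptotic regime survives. This requires controlling the flow across the circles $\{|w|=1\}$, which are invariant for~\eqref{eq:mainSys4-c} but are crossed transversally at the zeros of $V$ where the cut-off $\psi$ hands the dynamics over to~\eqref{eq:mainSys4-c2}, and the boundary behaviour at $s=\pm1$, where~\eqref{eq:mainSys4-c} collapses to the scalar equation $\de w/\de\tau=\sqrt{3/2}\,(w-u(\pm1))(1-w^2)$ whose equilibria already belong to $M_0$. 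Combined with the absence of periodic orbits, a Poincaré--Bendixson argument on $\overline\Omega$ then reduces everything to the clean dichotomy ``converge to some $P_c$'' versus ``reach $M_0$''.
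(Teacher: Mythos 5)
Your proposal follows essentially the same route as the paper's proof: $x$ as a Lyapunov function, LaSalle's invariance principle on the compact set $\overline\Omega$, characterization of the largest invariant subset of $\{\de x/\de\tau=0\}$, and instability of the equilibria with $x>0$ obtained by linearizing the reduced planar system $(s,w)$ on the constraint surface (your Jacobian $J$, with trace $\sqrt{3/2}$, is exactly the matrix the paper produces in its Step 6). The explicit exponential formula for $x(\tau)$ and the remark that substituting the constraint into \eqref{eq:mainSys4-c2} reproduces \eqref{eq:mainSys4-c} where $V>0$ are both sound.

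There is, however, one concrete gap: the boundary $s=\pm1$. You assert that when $x_\infty>0$ the trajectory accumulates \emph{only} on interior equilibria over critical points of $V$, and later that the equilibria of the scalar boundary equation at $s=\pm1$ ``already belong to $M_0$''. This fails precisely when $V^\pm_\infty:=\lim_{\phi\to\pm\infty}V(\phi)$ is finite and positive, a case Assumption~\ref{as:V} allows (e.g.\ $V$ increasing to a finite limit at $+\infty$). Then $u(\pm1)=0$, and the points $\bigl(\pm1,1/\sqrt{V^\pm_\infty},0\bigr)$ --- the paper's ``critical points at infinity'', its set $E_2$ --- are genuine equilibria of the extended system lying in $\overline\Omega$ with $x>0$, hence \emph{not} in $M_0$ as defined in \eqref{eq:boundaryOmega}, and they do belong to the largest invariant subset of $\{\de x/\de\tau=0\}$. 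Your dichotomy ``converge to an interior $P_c$ versus reach $M_0$'' therefore omits a third competing regime, and the genericity conclusion \eqref{eq:lim-xtau} is not yet established for such potentials. The repair is immediate with your own tools: evaluating your Jacobian at $s_c=\pm1$, where $f'(f^{-1}(\pm1))=0$, gives a lower-triangular matrix with eigenvalues $0$ and $\sqrt{3/2}$, so these boundary equilibria are likewise unstable --- which is exactly how the paper disposes of them --- and only after adding this case does the generic approach to $M_0$ follow.
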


\begin{proof}
    The results are consequences of applying the LaSalle theorem (see, e.g.,~\cite{khalil}).
    We divide the proof into several steps,
    verifying the hypoteses of that theorem.

\textbf{Step 1.} 
Let us prove that $\Omega$ is a bounded set.
By Assumption~\ref{as:V}, $V$ is bounded from below
and, by~\eqref{eq:mainSys4-b}, $x(\tau)$ is a decreasing function,
hence $x(\tau)\in [0,\bar x]$ for all $\tau\ge 0$ if $x(0)\in[0,\bar x]$.
Then, the constraint \eqref{eq:V-kappa0} implies that $w(\tau)$ is also bounded for all $\tau\ge 0$.

\textbf{Step 2.}
Let us now determine the closure $\overline\Omega$, which is a compact set.
Limit points $(s,x,w)$ for $\Omega$ that are outside $\Omega$ may be of the following forms:
\begin{itemize}
    \item $s=\pm 1$.
    If 
    \[
    V^+_{\infty}:=\lim_{\phi\to+\infty}V(\phi)=+\infty,
    \]
    then $(1,0,w)$ for any $w\in[-1,1]$ is a limit point for sequences in $\Omega$. On the contrary, if $V^+_\infty\in\mathbb R$ then all the points $(1,x,w)$ such that $V^+_\infty x^2+w^2 = 1$
    and $x\ge 0$ are limit points.
    Analogous considerations hold for $V^-_\infty:=\lim_{\phi\to-\infty}V(\phi)$.
    \item $x=0$. Besides the situations already considered before, there are also the points $(s,0,\pm 1)$ with $s\in(-1,1)$. 
\end{itemize}
Summarizing, the closure $\overline\Omega$ is obtained by adding the set
\[
\big\{(\pm 1,x,w)\in \mathbb{R}^3\,:\,x\in[0,\bar x],\,V^\pm_\infty x^2+w^2=1\big\}\cup \big\{(s,0,\pm 1) \in \mathbb{R}^3\,:s\in(-1,1)\big\}
\]
to $\Omega$, where, with abuse of notation, we are considering the case $w\in[-1,1]$, $x=0$  as a degenerate case ($V^\pm_\infty=\infty$) of the the half-ellipse $V^\pm_\infty x^2+w^2=1$, $x\ge 0$.

\textbf{Step 3.} Now we prove that $\overline\Omega$ is positively invariant. This can be done simply by considering when the initial condition is either in $\Omega$ or not. 
In the first case, we already know that the constraint \eqref{eq:V-kappa0} is satisfied and $x(\tau)$ is decreasing, so the solution curve will remain in $\Omega$.
Hence, it suffices to see what happens when the curve starts from $\overline{\Omega}\setminus\Omega$. 

If $s(0)=+1$, since $f'(f^{-1}(1))=0$ then $s(\tau)=1$ for all $\tau\ge 0$.
If $V^+_\infty\in\mathbb R$ then $u(1)=0$ and it can be seen that the following identity holds.
\[
\frac{\mathrm d}{\mathrm d\tau}(V^+_\infty x^2+w^2)=
-\sqrt 6 w^2 (V^+_\infty x^2+w^2 -1) = 0.
\] 
Hence, the constraint is conserved along the flow
and the solution remains in $\overline\Omega$.
Otherwise, if $V^+_\infty=\infty$
then $x(0)=0$, so $x(\tau)= 0$ for every $\tau\ge 0$.
Therefore,~\eqref{eq:mainSys4-c} implies that $w(\tau)\in[-1,1]$ for all $\tau\ge 0$.
Analogous arguments hold for $s=-1$. 

If the solution starts from a point $(s,0,\pm 1)$ with $s\in(-1,1)$,
then $x(\tau)=0$ and $w(\tau)=\pm 1$ for all $\tau\ge 0$, while~\eqref{eq:mainSys4-a} implies that $s(\tau)$ will be always in $(-1,1)$.
This completes the proof of the positive invariance of $\overline\Omega$. 

\textbf{Step 4.}
In view of the application of LaSalle theorem,
let us notice that the projection on $x$ is a Lyapunov function on  $\overline{\Omega}$, 
so we need to determine the subset $E$ of $\overline\Omega$
where $\de{x}/\de\tau = 0$.
Recalling \eqref{eq:mainSys4-b}, $E$ is the set of points where $w$ or $x$ vanish, and it consists of the following subsets:
\begin{itemize}
    \item\label{itm:1} $
    E_1 = \big\{(s,1/\sqrt{V(f^{-1}(s))},0): s \in (-1,1)\big\}$,
    and the limit case at infinity when $V^{\pm}_\infty \in \mathbb{R}$,
    hence
    $E_2 = \big\{(\pm1, 1/\sqrt{V^{\pm}_\infty},0)\big\}$;
    \item\label{itm:2} $E_3 = \big\{(\pm 1, 0, w): w \in [-1,1]\big\}$,
    when $V^\pm_\infty=\infty$;
    \item\label{itm:3} $E_4 = \big\{(s,0,\pm 1): s\in[-1,1]\big\}$.
\end{itemize}

\textbf{Step 5.}
We need to characterize the largest invariant subset of $E$.
We observe that $E_3$ and $E_4$ above both belong to this set: indeed, if $x(0)=0$ then $x(\tau)=0$ for all $\tau\ge 0$.
%These are the cases corresponding to the set $M_0$ defined in \eqref{eq:boundaryOmega}.

Now, let us examine the subset $E_1$.
When $s_0\in(-1,1)$, then the solution with this initial condition will remain in $E_1$ if  $w(\tau)=0$ for all $\tau\ge 0$.
This implies $s(\tau)=s_0$ and $x(\tau)=1/\sqrt{V(f^{-1}(s_0)}$.
Since $w(\tau)=0$ for every $\tau \ge 0$,
by~\eqref{eq:mainSys4-c2}, or by~\eqref{eq:mainSys4-c}, we deduce that $V'(f^{-1}(s_0))=0$, hence $s_0$ corresponds to a critical point of the potential $V$ with positive critical value. 

By considering the subset $E_2$,
if $s_0=\pm 1$  and $V^\pm_\infty\in\mathbb R$ we have a critical point ``at infinity'' for $V(\phi)$.
Since $f'(f^{-1}(\pm 1))=0$ by assumption, then $s(\tau)=s_0$ for every $\tau \ge 0$.
Moreover, since $u$ is always well defined on a neighbourhood of the extreme points $s = \pm 1$, and in this case $u(\pm1)= V'(\pm\infty)/(\sqrt{6}V^{\pm}_\infty) = 0$, 
we can use~\eqref{eq:mainSys4-c} to deduce that $w(\tau)=0$ for all $\tau\ge 0$. 

Summarizing, the largest invariant subset of $E$ is made by the subsets $E_3$ and $E_4$,
which together constitute the set $M_0$ defined by~\eqref{eq:boundaryOmega},
plus the subset of $E_1 \cup E_2$
for which $f^{-1}(s)$ is a critical point for $V$ with positive critical value, possibly including the case of critical points ``at infinity'' $s=\pm 1$,
hence
\[
    E_0 = \big\{
            (s,1/\sqrt{V(f^{-1}(s))},0):
            s \in [-1,1],\, V'(f^{-1}(s)) = 0
            \big\}.
\]
\textbf{Step 6.} 
Application of LaSalle invariance theorem says that a solution of the system approaches at infinity the set determined in the step above. 
To complete the proof, we have to show that the points in $E_0$,
which are all critical points for the system, are unstable.
Let us consider the case when $s = \pm1$ and $V^{\pm}_\infty \in \mathbb{R}$.
In this case, since we are in a neighbourhood of an extreme point, we can decouple the system, considering just~\eqref{eq:mainSys4-a} and~\eqref{eq:mainSys4-c}.
The linearized system at the point $(\pm1,0)$ is characterized by the following matrix:
\[
    \begin{bmatrix}
       0 & 0 \\
       -\sqrt{\frac{3}{2}}u'(\pm1) & \sqrt{\frac{3}{2}}
    \end{bmatrix},
\]
which always admits a positive eigenvalue.
Therefore, these points are unstable.
When a point in $E_0$ is such that $s \ne \pm1$,
then we have to work with the system \eqref{eq:mainSys4-a},\eqref{eq:mainSys4-b}, and ~\eqref{eq:mainSys4-c2}
in $(s,x,w)$. 
Nevertheless, a similar linearization procedure
shows the instability of such a point,
since there always exists an eigenvalue with positive real part.

Then, solutions must generically approach the set $M_0$ defined by~\eqref{eq:boundaryOmega},
and~\eqref{eq:lim-xtau} holds.
\end{proof}

According to the above theorem, generically, the study of limit points of trajectories within the set $M_0$ identifies the candidates for $\omega$-limit points of the system that we are considering.
First, let us notice that all the points in $M_0$
such that $s\ne \pm1$ are not equilibrium points.
Indeed, these points should have $|w| = 1$,
hence~\eqref{eq:mainSys4-a} implies that $s(\tau)$
is not constant.
Hence, we can reduce our analysis on the neighbourhoods of the extreme points where $|s| = 1$ and we can work with~\eqref{eq:mainSys4-c}.
By reducing once again to the planar system given by ~\eqref{eq:mainSys4-a} and~\eqref{eq:mainSys4-c},
the linearization of the system at a point $(s,w)$ gives the following matrix:
\begin{equation}
\begin{bmatrix}
 w\frac{f''\left(f^{(-1)}(s)\right)}{f'\left(f^{(-1)}(s)\right)}
 & f'\left(f^{(-1)}(s)\right) \\
 \sqrt{\frac{3}{2}} \left(w^2-1\right) u'(s)
 & \sqrt{\frac{3}{2}} \left(2 u(s) w-3 w^2+1\right)
\end{bmatrix}.
\end{equation}
Let us explore the possible cases corresponding to the equilibrium points of ~\eqref{eq:mainSys4-a},~\eqref{eq:mainSys4-c}:
\begin{enumerate}
	\item $s=\pm1$, with $u(1) = 0$ or $u(-1) = 0$ and $w=0$. The eigenvalues are 0 and $\sqrt{3/2}$, hence these points are unstable;
	\item $s  = \pm 1$ and $w = \pm 1$. We can limit to the two cases where $w$ and $s$ have the same sign, since they are the physically relevant ones ($\phi$ and $\dot\phi$ have the same sign). 
    Considering for instance $s=w=1$, the hypothesis \eqref{eq:condf2} made on $f$ gives the two eigenvalues $\lambda_+ <0$ and $\sqrt 6 (u(1)-1) $. Therefore, $u(1)<1$ is a sufficient condition for stability. Same argument applies to $s=w=-1$, where $u(-1)>-1$ is needed to have a stable equilibrium point.
 \item $s = \pm 1$ and $w = u(\pm1)$. Observe that, due to Assumption \ref{as:V} made on $V(\phi)$, the constraint \eqref{eq:V-kappa0} implies $u(\pm 1)^2\le 1$. Considering the positive case $s=1$, the two eigenvalues are given by $\lambda_+ u(1) <0$ and $-\sqrt{3/2} \left(u(1)^2-1\right)$, then if $u(1)<1$ the equilibrium is unstable.
 Similarly, if $u(-1) > -1$, then the point
 $(-1,u(-1))$ is unstable.
\end{enumerate}

We summarize the above considerations in the following result.
\begin{proposition}
    \label{prop:sf}
    If $u(-1)>-1$ or $u(1)<1$, then the scalar field generically diverges to infinity, i.e., $s^2\to 1$, with a velocity such that $w^2\to 1$.
    
\end{proposition}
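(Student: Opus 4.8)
The plan is to read off the conclusion from Theorem~\ref{theorem:x0} and the linearizations computed just above, by showing that, under either hypothesis, every equilibrium a generic trajectory could reach—apart from a sink of the form $(\pm1,0,\pm1)$—is unstable, so that the trajectory is forced into such a sink.

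First I would invoke Theorem~\ref{theorem:x0}: a generic collapsing scalar field has $x(\tau)\to 0^+$ and approaches $M_0$, so its $\omega$-limit set is a connected invariant subset of $M_0$ sitting on the two-dimensional constraint surface $\overline\Omega$. Near $\{|s|=1\}$, where $u$ is well defined, equations \eqref{eq:mainSys4-a} and \eqref{eq:mainSys4-c} form an autonomous planar system in $(s,w)$, and the points of $M_0$ with $|s|<1$ have $|w|=1$, hence $\de s/\de\tau=f'(f^{-1}(s))w\neq 0$ and are not equilibria. Therefore the only equilibria available as $\omega$-limits lie on $\{|s|=1\}$ and are exactly those in cases (1)--(3) above, together with the points of $E_0$ already shown unstable in Step~6 of the proof of Theorem~\ref{theorem:x0}.

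Next I would assemble the stability verdicts from the Jacobian displayed above. The corners $(1,-1)$ and $(-1,1)$ are unstable for every admissible $u$, since the triangular Jacobian at $s=\pm1$ carries the positive eigenvalue $-\lambda_+$, resp. $\lambda_-$, forced by \eqref{eq:condf2}; the case-(1) points are likewise always unstable, while the case-(3) equilibria $(1,u(1))$ and $(-1,u(-1))$ are unstable exactly when $u(1)<1$, resp. $u(-1)>-1$; and under the very same inequalities the corners $(1,1)$, resp. $(-1,-1)$ are asymptotically stable. As no point of $E_0$ is stable either, the hypothesis $u(1)<1$ or $u(-1)>-1$ ensures that at least one sink exists and that every stable equilibrium is a corner with $s^2=w^2=1$ and matching signs.

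Finally I would close with the genericity argument already used for $E_0$: the initial data converging to an unstable equilibrium lie on stable manifolds of positive codimension and so form a non-generic set, whence a generic trajectory converges to one of the asymptotically stable corners $(\pm1,0,\pm1)$, giving $s^2\to1$ and $w^2\to1$; since $s=f(\phi)$ with $f(\mathbb{R})=(-1,1)$, this is precisely the divergence of the scalar field to infinity with $w^2\to1$. The hard part will be ruling out that a non-negligible set of generic trajectories has $\omega$-limit a nontrivial heteroclinic chain in $M_0$ rather than a single equilibrium; for this I would apply Poincar\'e--Bendixson on $\overline\Omega$ and check that the induced flow along the edges of $M_0$ is monotone and admits no heteroclinic cycle—every connection running into a sink—so that the stable corners remain the only generic attractors.
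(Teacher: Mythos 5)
Your proposal is correct and follows essentially the same route as the paper: Proposition~\ref{prop:sf} is proved there precisely by combining Theorem~\ref{theorem:x0} with the preceding classification and linearization of the equilibria of the planar system \eqref{eq:mainSys4-a}, \eqref{eq:mainSys4-c} on $M_0$, concluding that under the stated inequalities the only stable equilibria are the corners $s=w=\pm1$. You go beyond the paper's text in two minor ways: you treat the mixed-sign corners $(1,-1)$ and $(-1,1)$ explicitly via the positive eigenvalue $-\lambda_+$ (resp.\ $\lambda_-$) forced by \eqref{eq:condf2}, where the paper simply discards them as physically irrelevant; and you flag the need to exclude $\omega$-limit sets that are heteroclinic chains in $M_0$ rather than single equilibria. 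That last check, which the paper does not carry out, is a legitimate concern rather than pedantry: the paper's own Example~\ref{ex:tanh} shows that when both $u(-1)<-1$ and $u(1)>1$ the generic $\omega$-limit set \emph{is} such a boundary cycle, so your Poincar\'e--Bendixson step --- verifying that under the hypothesis a sink sits on the boundary square and breaks any possible circulation --- is exactly what makes the ``generic convergence to a stable corner'' conclusion rigorous.
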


\begin{example}\label{ex:tanh}
We can observe that when both $u(-1)<-1$ and $u(1)>1$, the above system lacks stable equilibrium points.
In this case, it can be shown that the solution approaches the set $M$ as a limit cycle. As an example, let us consider the following potential:
$$
V(\phi)=\cosh \left( \alpha\sqrt{6}  \phi \right),
$$
where $\alpha$ is a positive parameter. In this situation we choose the function
$$
s=f(\phi)=\tanh \left( \alpha  \sqrt{6}\phi \right),
$$
in such a way that $u(s)$ becomes simply $u(s)=\alpha s$ and the system given by \eqref{eq:mainSys4-a} and \eqref{eq:mainSys4-c} takes the following the form:
\begin{equation}\label{eq:sys-cosh}
\frac{\de s}{\de\tau}=\alpha \sqrt{6}  \left(1-s^2\right) w,\qquad\frac{\de w}{\de\tau}=\sqrt{\frac{3}{2}} \left(1-w^2\right) (w-\alpha  s).
\end{equation}

\begin{figure}
    \centering
\subfigure[$\alpha=\frac12$]{\includegraphics[width=0.45\linewidth]{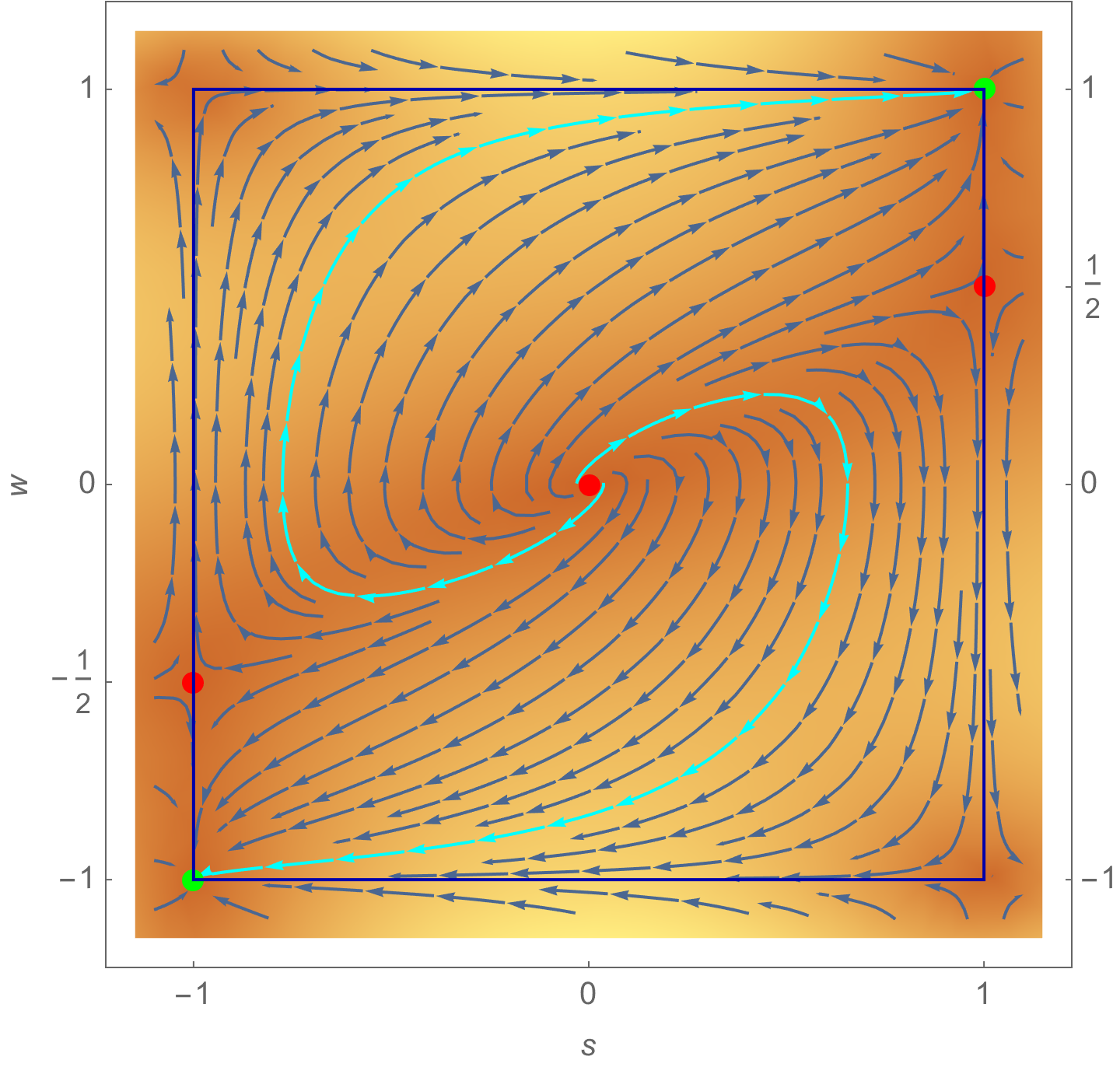}}
\subfigure[$\alpha=\frac32$ ]{
\begin{tikzpicture}
\node[inner sep=0pt] (main) at (0,0)
    {\includegraphics[width=0.45\textwidth]{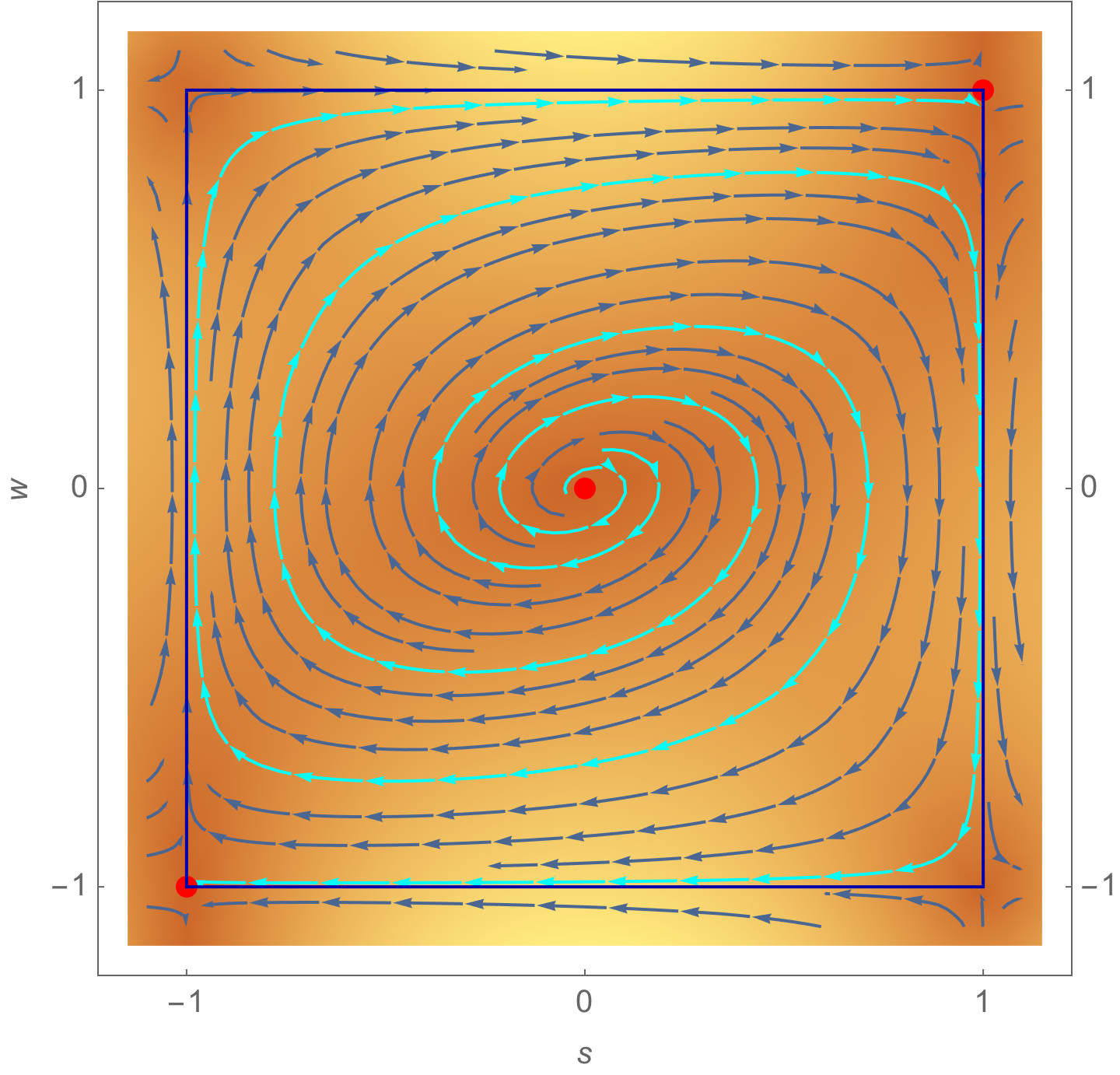}};
\node[inner sep=0pt] (son) at (0.7,1.5)
    {\includegraphics[width=.1\textwidth]{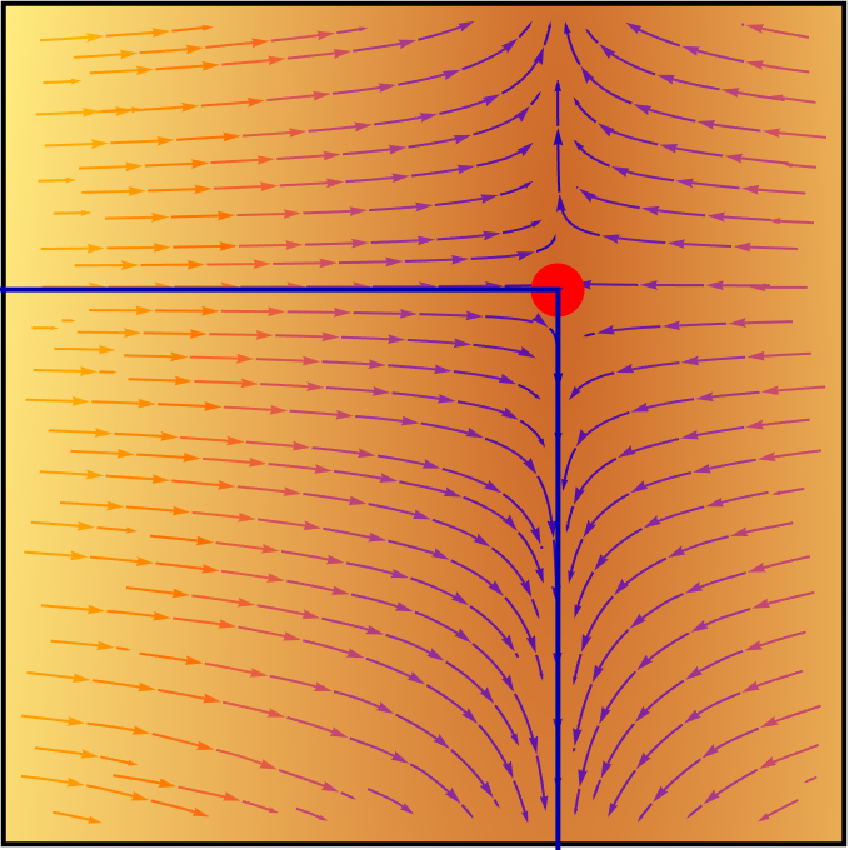}};
\node (rect) at (2.33,2.43) {\ };
\draw[draw=black] (2.2,2.3) rectangle ++(0.3,0.3);
\draw[<-,thick] (son.north east) -- (rect.south west);
\end{tikzpicture}
  } 
\caption{The flow of the vector field \eqref{eq:sys-cosh} in Example \ref{ex:tanh} for two values of $\alpha$. When $\alpha\ge 1$ the point $(1,1)$ (and $(-1,-1)$ as well) become unstable equilibria.}
    \label{fig:ex}
\end{figure}

Therefore, when $\alpha<1$ we have $(s,w)=(1,1)$ and $(s,w)=(-1,-1)$ as stable equilibria, but when $\alpha>1$ the only equilibria are the origin and the vertices of the square $[-1,1]\times[-1,1]$, and none of them is stable (see Figure \ref{fig:ex}).

The case when $\alpha = 1$ proves more delicate to analyze, given that $(-1,-1)$ and $(1,1)$ represent semi-hyperbolic equilibria. In this scenario, it is possible to demonstrate that integral curves with initial conditions inside the square $[-1,1]\times[-1,1]$ never converge to these equilibria (see, e.g., \cite[Theorem 2.19(iii)]{dumortier2006} for more details).
Nevertheless, in all the above cases the integral curves must approach the boundary of the square, as one can infer also by calculation of the function $x^2(\tau)$, that we know to be decreasing.
In this case, it reads as follows:
\[
x^2(\tau)=\sqrt{1-s^2} \left(1-w^2\right).
\]
Since the integral curves of \eqref{eq:sys-cosh} with initial condition inside the square approaches its boundary,
the scalar field oscillates infinitely many times with wider and wider amplitudes around $\phi=0$. 
\end{example}

\section{The open spatial topology $\kappa = -1$}%
\label{sec:collapse_of_the_open_spatial_topology_kappa_1_}
For reader's convenience, we report here the system 
in its complete form, with $\kappa = -1$:
\begin{subequations}
	\begin{align}
		&\frac{\de s}{\de\tau} = f'(f^{-1}(s))w,\label{eq:mainSys-5a}\\
		&\frac{\de x}{\de\tau}
		= -\frac{x}{\sqrt{6}}(3w^2+ z^2),\label{eq:mainSys-5b}\\
		&\frac{\de w}{\de\tau}
		= \sqrt{\frac{3}{2}}
		\left[w\left(1 - w^2 - \frac{1}{3}z^2\right)
		-u(s)(1-w^2 - z^2)\right],\label{eq:mainSys-5c}\\
		&\frac{\de z}{\de\tau}
		= \frac{z}{\sqrt{6}}
		\left(1 - 3w^2 -z^2\right),\label{eq:mainSys-5d}
	\end{align}
\end{subequations}
where the constraint 
(invariant w.r.t. the flow of the system)
reads as follows:
\begin{equation}\label{eq:constr-wneg}
	V(f^{-1}(s))x^2 + w^2 + z^2 = 1.
\end{equation}
As for the case $\kappa = 0$,
near the zeros of the potential~\eqref{eq:mainSys-5c} should be replaced by
\begin{equation}\label{eq:mainSys5-cbis}
    \frac{\de w}{\de\tau}
    = \sqrt{\frac{3}{2}}w
	\left( 1 -w^2 -\frac13 z^2 \right)
	- \frac12\, V'(f^{-1}(s))x^2.
\end{equation}  
Setting $\theta_1(s,w,z)$ and $\theta_2(s,x,w,z)$ as the right hand sides of \eqref{eq:mainSys-5c} and \eqref{eq:mainSys5-cbis}, respectively, we can repeat the construction leading to Definition \ref{def:csf}.
For this case, we give the following definition.
\begin{definition}\label{def:csf-k-1}
    A \textit{collapsing scalar field in the  case} $\kappa=-1$ is a 
    solution of \eqref{eq:mainSys-5a}, \eqref{eq:mainSys-5b},  
\begin{equation}\label{eq:mainSys4-c3}
\frac{\de w}{\de\tau}=(1-\psi(s))\theta_1(s,w,z)+\psi(s)\theta_2(s,x,w,z),
\end{equation}
and \eqref{eq:mainSys-5d} with initial data satisfying \eqref{eq:constr-wneg}, $x(0)>0$,
and $V$ satisfying Assumption~\ref{as:V}.
\end{definition}

We can now provide the counterpart of Theorem~\ref{theorem:x0}.
\begin{teo}
	\label{theorem:x0-wneg}
	Let $\bar{x} > 0$ be fixed and let $\Omega$
	be the following subset of $\mathbb{R}^4$:
	\begin{equation}
		\label{eq:def-Omega-wneg}
		\Omega = \Big\{
			(s,x,w,z) \in \mathbb{R}^4: s \in (-1,1),\, x \in (0,\bar{x}],\,
            V(f^{-1}(s))x^2 + w^2+z^2 = 1
		\Big\}.
	\end{equation}
If Assumption~\ref{as:V} holds,
 then a collapsing scalar field in the case $\kappa=-1$ with initial data in $\Omega$ remains in $\Omega$
	for every $\tau \ge 0$,
	and generically approach the set $M_{-1}\subseteq \overline{\Omega}\setminus\Omega$ defined as follows:
	\begin{equation}
		\label{eq:boundaryOmega-wneg}
	   M_{-1}\coloneqq\overline\Omega\cap \Big(\{|s|=1\}\cup\{w^2+z^2=1\}\Big)\cap\{x=0\}.
	\end{equation}
	In particular, we have generically that
	\[
		\lim_{\tau\to\infty}x(\tau) = 0^+.
	\]
\end{teo}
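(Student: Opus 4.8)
The plan is to transplant, essentially verbatim, the six-step LaSalle argument of Theorem~\ref{theorem:x0}, tracking only the modifications forced by the extra coupled variable $z$ and by the constraint \eqref{eq:constr-wneg}. First I would establish boundedness of $\Omega$: Assumption~\ref{as:V} gives $V$ bounded from below, and \eqref{eq:mainSys-5b} reads $\frac{\de x}{\de\tau}=-\frac{x}{\sqrt6}(3w^2+z^2)\le 0$, so $x(\tau)\in[0,\bar x]$ whenever $x(0)\in[0,\bar x]$; the constraint then confines $w$ and $z$ to a bounded region. Passing to the compact closure $\overline\Omega$, the half-ellipse of the flat case is replaced by the half-ellipsoid $V^\pm_\infty x^2+w^2+z^2=1$ on the faces $s=\pm1$, together with the cylinder $\{(s,0,w,z):s\in[-1,1],\,w^2+z^2=1\}$ on $\{x=0\}$; the two regimes $V^\pm_\infty\in\mathbb{R}$ and $V^\pm_\infty=\infty$ are distinguished exactly as in Step~2 of Theorem~\ref{theorem:x0}.

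The next block concerns positive invariance of $\overline\Omega$. On the faces $s=\pm1$ one uses $f'(f^{-1}(\pm1))=0$ to freeze $s$, and invokes that the constraint is a flow invariant of \eqref{eq:mainSys-5a}--\eqref{eq:mainSys-5d}; on $\{x=0\}$ one uses that $x\equiv 0$ is invariant. The one genuinely new verification is that the sphere $\{x=0\}\cap\{w^2+z^2=1\}$ is invariant: on it $1-w^2-z^2=0$, so \eqref{eq:mainSys-5c} and \eqref{eq:mainSys5-cbis} coincide and a short computation with \eqref{eq:mainSys-5c}--\eqref{eq:mainSys-5d} gives $\frac{\de}{\de\tau}(w^2+z^2)=0$. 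Taking $x$ as the Lyapunov function, the set $E=\{\frac{\de x}{\de\tau}=0\}$ is now $\{x=0\}\cup\{w=z=0\}$ — note the $z^2$ term forces \emph{both} $w$ and $z$ to vanish — and its relevant pieces are the boundary sphere at $x=0$, its traces at $s=\pm1$, and the locus $E_1=\{(s,1/\sqrt{V(f^{-1}(s))},0,0)\}$ with its endpoints at $s=\pm1$.

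The largest invariant subset of $E$ is identified as in Step~5: the $\{x=0\}$ portion is invariant and constitutes $M_{-1}$, whereas a point of $E_1$ survives only if $w\equiv 0$, which through \eqref{eq:mainSys5-cbis} forces $V'(f^{-1}(s))=0$; hence only critical points of $V$ with positive value (including the points at infinity $s=\pm1$ when $V^\pm_\infty\in\mathbb{R}$) contribute a residual set $E_0$. LaSalle's theorem then yields convergence to $M_{-1}\cup E_0$, and it remains to prove that every point of $E_0$ is unstable, so that generic data avoid their lower-dimensional stable manifolds and approach $M_{-1}$, giving $\lim_{\tau\to\infty}x(\tau)=0^+$.

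The step I expect to be decisive — which in fact \emph{simplifies} relative to the flat case — is the instability analysis. At any point of $E_0$ one has $w=z=0$, and since the right-hand side of \eqref{eq:mainSys-5d} carries an overall factor $z$, the $z$-row of the linearized system is $(0,0,0,1/\sqrt6)$, yielding the eigenvalue $1/\sqrt6>0$; its eigenvector lies in the $z$-direction, which is tangent to the constraint surface (whose normal at $E_0$ points along $x$). Thus the transverse $z$-direction alone destabilizes every point of $E_0$, bypassing the more delicate $(s,w)$-linearization needed when $\kappa=0$. Care is still required to treat neighbourhoods of the zeros of $V$ through the cutoff \eqref{eq:mainSys4-c3} and to handle the endpoints $s=\pm1$ uniformly, but no estimate beyond those of Theorem~\ref{theorem:x0} is needed.
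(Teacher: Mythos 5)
Your proposal is correct and follows essentially the same six-step LaSalle argument as the paper's proof: the same Lyapunov function $x$, the same identification of $E$ and of its largest invariant subset $M_{-1}\cup E_0$, and the same instability argument for $E_0$ via the eigenvalue $1/\sqrt{6}$ produced by the $z$-equation \eqref{eq:mainSys-5d}. Your added remark that the destabilizing $z$-direction is tangent to the constraint surface (whose normal at $E_0$ points along $x$) is a detail the paper leaves implicit, but otherwise the two arguments coincide step by step.
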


\begin{proof}
    As for Theorem~\ref{theorem:x0},
    the proof is based on the LaSalle invariance theorem and it proceeds following the same steps.

    \textbf{Step 1.} Since Assumption~\ref{as:V} implies that $V$ is bounded from below, it is easy to see that $\Omega$ is bounded.
    
    \textbf{Step 2.} In this case, the limit points for $\Omega$ that are not in $\Omega$
    are the following:
    \begin{multline*}
\overline\Omega\setminus\Omega=\{(\pm 1,x,w,z)\,:\,x\in[0,\bar x],\,V^\pm_\infty x^2+w^2+z^2=1\} \\ 
\cup \{(s,0,w,z)\,:\,s\in(-1,1),\,w^2+z^2=1\},
\end{multline*}
where again, with abuse of notation, we include in the first subset the situation where $V^\pm_\infty=\infty$, which implies
$ x=0$ and $w^2+z^2\le 1$.

\textbf{Step 3.}
Let us show the positive invariance of $\overline{\Omega}$.
If the initial condition is inside $\Omega$,
then~\eqref{eq:constr-wneg} is satisfied and $x$
is a decreasing function, so the solution remains inside $\Omega$.
Otherwise, we need to consider the following situations:
\begin{itemize}
\item $s_0 = \pm1$ and $V^{\pm}_\infty \in \mathbb{R}$:
in this case we have $u(s_0) = 0$. Hence, by using~\eqref{eq:mainSys-5a}--\eqref{eq:mainSys-5d}
one can find that 
\begin{equation}\label{eq:const}
\frac{\de}{\de\tau}(V^\pm_\infty x^2+w^2+z^2)=-\sqrt{\frac{2}{3}}(3w^2+z^2)(V^\pm_\infty x^2+w^2+z^2-1)
= 0,
\end{equation}
so the constraint is satisfied for all $\tau\ge 0$.
\item $s_0 = \pm1$ and $V^{\pm}_\infty = \infty$:
in this case we have $x_0 = 0$ and $w_0^2 + z_0^2 \le 1$.
Since $f'(f^{-1}(s_0)) = 0$ and $x(\tau) = 0$
for all $\tau \ge 0$, the positive invariance is achieved by showing that $w^2(\tau) + z^2(\tau) \le 1$
for all $\tau \ge 0$.
Since we are on a neighbourhood of the extrema, we can use~\eqref{eq:mainSys-5c} that, together with~\eqref{eq:mainSys-5d}, implies the following equation:
\begin{equation}
    \label{eq:w^2+z^2le1}
    \frac{\de}{\de\tau}(w^2 + z^2) 
= \sqrt{\frac{2}{3}}(1 - w^2 - z^2)(3w^2 + z^2 - 3u(s_0)w).
\end{equation}
By the last identitty, we deduce that if $w_0^2 + z_0^2 \le 1$ then $w^2(\tau) + z^2(\tau) \le 1$
for every $\tau \ge 0$.
\item $x_0 = 0$, $s_0 \in (-1,1)$ 
and $w_0^2 + z_0^2 = 1$:
in this case, it suffices to show that 
$w^2(\tau) + z^2(\tau) = 1$ for every $\tau \ge 0$.
By combining~\eqref{eq:mainSys-5d} 
and~\eqref{eq:mainSys5-cbis} we obtain
\begin{equation}
 \label{eq:w^2+z^2eq1}
    \frac{\de}{\de\tau}(w^2 + z^2) 
= \sqrt{\frac{2}{3}}(1 - w^2 - z^2)(3w^2 + z^2),
\end{equation}
hence the constraint is always satisfied.
\end{itemize}

\textbf{Step 4.}
Even in this case the function $x(\tau)$ is a Lyapunov function and we need to characterize the set $E\subset\overline{\Omega}$
where $\de x/\de\tau = 0$.
By~\eqref{eq:mainSys-5b} we obtain that $E$
is constituted by the following subsets:
\begin{itemize}
    \item $E_1 = \big\{(s,1/\sqrt{V(f^{-1}(s)},0,0): s \in [-1,1]\big\}$, where the limit cases $s = \pm 1$
    are possible only when $V^{\pm}_\infty \in \mathbb{R}$;
    \item $E_2 = \big\{(\pm1,0,w,z): w^2 + z^2 \le 1\big\}$ if $V^{\pm}_{\infty} = \infty$;
    \item $E_3 = \big\{(s,0,w,z): s \in (-1,1),\, w^2 + z^2 = 1\big\}$;
\end{itemize}

\textbf{Step 5.}
Let us determine the largest invariant subset of $E$.
The two subsets $E_2$ and $E_3$ are invariant by~\eqref{eq:w^2+z^2le1} and~\eqref{eq:w^2+z^2eq1}, respectively.
A solution starting in $E_1$ remains in $E$ only if it remains in $E_1$.
Hence, $w(\tau)$ should be zero for every $\tau$
and by~\eqref{eq:mainSys5-cbis} we obtain that $V'(f^{-1}(s_0))$ should vanish.
As a consequence, the largest invariant subset of $E$
is constituted by $E_2$ and $E_3$, which together form the set $M_{-1}$, and the subset $E_0 = \big\{(s,1/\sqrt{V(f^{-1}(s))},0,0): s \in [-1,1],\, V'(f^{-1}(s)) = 0 \big\}$.
            
\textbf{Step 6.}
The proof ends by noticing that every point in $E_0$
is an unstable fixed point for the system.
This can be done simply by observing that the linearization of the system at a point $(s,1/\sqrt{V(f^{-1}(s)},0,0)$ has always $1/\sqrt{6}$ as eigenvalue, due to~\eqref{eq:mainSys-5d}.
\end{proof}
  
As done for the case $\kappa = 0$,
the study of the stability of the equilibrium points for the system~\eqref{eq:mainSys-5a}--\eqref{eq:mainSys-5d}
in $M_{-1}$ gives us the candidates to $\omega$--limit points of the system.
Since $M_{-1}$ is given by~\eqref{eq:boundaryOmega-wneg}, we have the following cases:
\begin{itemize}
    \item if $s_0 \ne \pm1$ and $(s_0,0,w_0,z_0) \in M_{-1}$ is an equilibrium point, then by~\eqref{eq:mainSys-5a} we have $w_0 = 0$, which implies $z_0^2 = 1$.
    In this case, the linearized system is
    characterized by the following matrix:
    \[
    \begin{bmatrix}
        0 & 0 & f'(f^{-1}(s_0)) & 0\\
        0 & -\frac{1}{\sqrt{6}} & 0 & 0\\
        0 & 0 & \sqrt{\frac{2}{3}} &  0\\
        0 & 0 & 0 & -\sqrt{\frac{2}{3}} 
        \end{bmatrix},
    \]
    hence the two lines given by 
    $\big\{(s,0,0,\pm1): s \in (-1,1)\big\}$
    are unstable equilibria for the system.
    \item 
    if $s_0 = \pm 1$, then 
    we can reduce to study the system 
    only with respect to $(s,w,z)$.
    By~\eqref{eq:mainSys-5d}, we have an equilibrium point if either $z_0 = 0$
    or $1-3w_0^2 - z_0^2 = 0$.
    Let us examine separately the two cases:
    \begin{itemize}
        \item $s_0 = \pm 1$ and $z_0 = 0$:
        we have considerations analogous to the case where $\kappa = 0$.
        Specifically, the physically relevant points $(1,1,0)$ and $(-1,-1,0)$ are stable if $u(1) < 1$ and $u(-1) > -1$, respectively.
        Moreover, the other two points of equilibrium, which are $(1,u(1),0)$ and $(-1,u(-1),0)$, are always unstable.
    \item $s_0 = \pm1$ and $1-3w_0^2 -z_0^2 = 0$:
        in such points,~\eqref{eq:mainSys-5c}
        reads as follows:
        \[
        \frac{\de w}{\de\tau}
        = \sqrt{\frac{2}{3}}w_0\big(1 - 3w_0u(s_0)\big).
        \]
        As a consequence, 
        such points are of equilibrium  if either $w_0 = 0$ or $3w_0u(s_0) = 1$.
        If $w_0 = 0$, then $z_0^2 = 1$
        and the linearized system is given by the following matrix:
        \[
        \begin{bmatrix}
            0 & 0 & 0 \\
            0 & \sqrt{\frac{2}{3}} & \sqrt{6}u(s_0)z_0 \\
            0 & 0 & -\sqrt{\frac{2}{3}}
        \end{bmatrix},
        \]
        so the point is an unstable equilibrium.
        On the other hand,
        if $s_0 = \pm1$, 
        $1 - 3w_0^2 + z_0^2 = 0$
        and
        $3w_0u(s_0) = 1$,
        the linearized system is characterized by the following matrix:
        \[
        \begin{bmatrix}
            w_0 \lambda_{\pm} & 0 & 0\\
            -\sqrt{6}u'(s_0)w_0^2 & \sqrt{\frac{2}{3}}(1 + z_0^2) & 
            \sqrt{\frac{2}{3}}z_0(3u(s_0) - w_0) \\
            0 & -\sqrt{6}w_0z_0 & -\sqrt{\frac{2}{3}}z_0^2 
        \end{bmatrix},
        \]
        whose eigenvalues are
        $w_0\lambda_{\pm} < 0$
        (where we assume
        that $w_0$ and $s_0$ have the same sign),
        and 
        $\frac{1}{\sqrt{6}}(1 \pm \sqrt{
        1 - 4z_0^2})$,
        so it always has an eigenvalue with positive real part.
        Therefore, also these equilibrium points are unstable.
    \end{itemize}    
\end{itemize}

Summarizing, by the above considerations we obtain that Proposition~\ref{prop:sf} holds even in this case,
namely if $u(-1)>-1$ or $u(1)<1$ then the scalar field generically diverges, meaning that $s^2 \to 1$, with a velocity such that $w^2 \to 1$
and $z^2 \to 0$.

\section{A collapsing global model}%
\label{sec:global_model_the_exterior}

The above models can be considered as matter interior of a global model obtained performing a suitable junction with an exterior spacetime, thereby obtaining models of collapsing objects, whose endstate can be investigated.

A natural choice for the exterior is
the so--called \emph{generalized Vaidya solution}, which is
the spacetime generated by a radiating object \cite{Wang:1998qx} and 
it reads as follows:
\begin{equation}\label{eq:GV}
\text ds_{\mathrm{ext}}^2=-\left(1-\frac{2M(U,R)}R\right)\,\mathrm
dU^2-2\,\mathrm
dR\,\mathrm dU + R^2\,\mathrm d\Omega^2,
\end{equation}
where $M$ is an arbitrary (positive) function.
The matching is performed along a hypersurface $\Sigma=\{r=r_b\}$. As proved in \cite{Giambo:2005se,GGM-JMP2007}, the junction conditions of Darmois--Israel between these two models require continuity of $M$ at the junction hypersurface, and the additional condition 
$\tfrac{\partial M}{\partial U}=0$, again on the junction hypersurface.

Having established a global model, our next task is to investigate whether recollapse occurs within a finite amount of comoving time.
Let us start by recalling that $\tfrac{\de t}{\de\tau}=-\tfrac1{\sqrt 6 h}=x/\sqrt 2.$ The time of collapse is then given by
\begin{equation}\label{eq:ts}
t_s=\frac{1}{\sqrt 2}\int_0^\infty x(\tau)\,\mathrm d\tau.    
\end{equation}
On the other hand, \eqref{eq:mainSys3-b} implies
\[
\frac{\de x}{\de\tau}
				= -\frac{x}{\sqrt{6}}(3w^2 - \kappa z^2),
\]
and therefore we obtain
\begin{equation}\label{eq:xtau}
x(\tau)=x_0\,\mathrm{exp}\left(\frac1{\sqrt 6}\int_0^\tau \big(\kappa z(\tilde\tau)^2-3w(\tilde\tau)^2\big)\mathrm d\tilde\tau\right).
\end{equation}
Let us consider, for example, the case $\kappa=0$. We have seen that, under suitable assumptions on $V(\phi)$, generically $w^2\to 1$.
By~\eqref{eq:xtau}, we have $x(\tau)\sim e^{-\tau\sqrt{3/2}}$ and then the integral in \eqref{eq:ts} converges, so the solution collapses in a finite amount of comoving time. 

There is also a non generic situation in the case $\kappa=0$ where $w\to u(1)<1$ (a similar argument applies to $u(-1)>-1$).
In this case, by~\eqref{eq:xtau}
we infer
$x(\tau)\sim e^{-\tau\sqrt{3/2}u(1)^2}$,
so again, when $u(1)>0$, the integral converges and the singularity forms in a finite amount of time. Observe that, in case $w\to u(1)=0$, one can conceive situations where 
\[
\int_0^\infty w(\tau)^2\,\mathrm d\tau
\]
diverges.
In this cases, the integral \eqref{eq:ts} may also diverge, indicating a solution that indefinitely collapses, remaining regular eternally.

\subsection{Horizon formation}
\label{sec:horizon_formation}
In addition to the above analysis, one can examine the causal character of the solution when approaching (finitely or infinitely) the singularity. For this purpose, one can use the results from \cite{Giambo:2005se}, where it is shown that if $\dot a(t)$ remains bounded when approaching the singularity, the boundary of the interior can be chosen so small that the apparent horizon does not form (see Figure \ref{fig:horizon}).

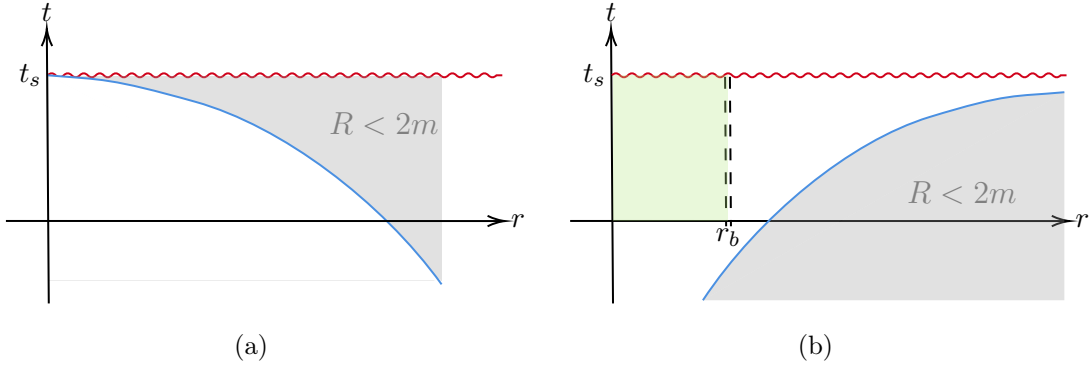
\begin{figure}
    \centering
\subfigure[\ ]
{
\tikzset{every picture/.style={line width=0.75pt}} %set default line width to 0.75pt   
\begin{tikzpicture}[x=0.75pt,y=0.75pt,yscale=-0.8,xscale=0.8]
%uncomment if require: \path (0,343); %set diagram left start at 0, and has height of 343

%Straight Lines [id:da9627312740179843] 
\draw [color={rgb, 255:red, 208; green, 2; blue, 27 }  ,draw opacity=1 ]   (60.72,119.56) .. controls (62.39,117.89) and (64.05,117.89) .. (65.72,119.56) .. controls (67.39,121.23) and (69.05,121.23) .. (70.72,119.56) .. controls (72.39,117.89) and (74.05,117.89) .. (75.72,119.56) .. controls (77.39,121.23) and (79.05,121.23) .. (80.72,119.56) .. controls (82.39,117.89) and (84.05,117.89) .. (85.72,119.56) .. controls (87.39,121.23) and (89.05,121.23) .. (90.72,119.56) .. controls (92.39,117.89) and (94.05,117.89) .. (95.72,119.56) .. controls (97.39,121.23) and (99.05,121.23) .. (100.72,119.56) .. controls (102.39,117.89) and (104.05,117.89) .. (105.72,119.56) .. controls (107.39,121.23) and (109.05,121.23) .. (110.72,119.56) .. controls (112.39,117.89) and (114.05,117.89) .. (115.72,119.56) .. controls (117.39,121.23) and (119.05,121.23) .. (120.72,119.56) .. controls (122.39,117.89) and (124.05,117.89) .. (125.72,119.56) .. controls (127.39,121.23) and (129.05,121.23) .. (130.72,119.56) .. controls (132.39,117.89) and (134.05,117.89) .. (135.72,119.56) .. controls (137.39,121.23) and (139.05,121.23) .. (140.72,119.56) .. controls (142.39,117.89) and (144.05,117.89) .. (145.72,119.56) .. controls (147.39,121.23) and (149.05,121.23) .. (150.72,119.56) .. controls (152.39,117.89) and (154.05,117.89) .. (155.72,119.56) .. controls (157.39,121.23) and (159.05,121.23) .. (160.72,119.56) .. controls (162.39,117.89) and (164.05,117.89) .. (165.72,119.56) .. controls (167.39,121.23) and (169.05,121.23) .. (170.72,119.56) .. controls (172.39,117.89) and (174.05,117.89) .. (175.72,119.56) .. controls (177.39,121.23) and (179.05,121.23) .. (180.72,119.56) .. controls (182.39,117.89) and (184.05,117.89) .. (185.72,119.56) .. controls (187.39,121.23) and (189.05,121.23) .. (190.72,119.56) .. controls (192.39,117.89) and (194.05,117.89) .. (195.72,119.56) .. controls (197.39,121.23) and (199.05,121.23) .. (200.72,119.56) .. controls (202.39,117.89) and (204.05,117.89) .. (205.72,119.56) .. controls (207.39,121.23) and (209.05,121.23) .. (210.72,119.56) .. controls (212.39,117.89) and (214.05,117.89) .. (215.72,119.56) .. controls (217.39,121.23) and (219.05,121.23) .. (220.72,119.56) .. controls (222.39,117.89) and (224.05,117.89) .. (225.72,119.56) .. controls (227.39,121.23) and (229.05,121.23) .. (230.72,119.56) .. controls (232.39,117.89) and (234.05,117.89) .. (235.72,119.56) .. controls (237.39,121.23) and (239.05,121.23) .. (240.72,119.56) .. controls (242.39,117.89) and (244.05,117.89) .. (245.72,119.56) .. controls (247.39,121.23) and (249.05,121.23) .. (250.72,119.56) .. controls (252.39,117.89) and (254.05,117.89) .. (255.72,119.56) .. controls (257.39,121.23) and (259.05,121.23) .. (260.72,119.56) .. controls (262.39,117.89) and (264.05,117.89) .. (265.72,119.56) .. controls (267.39,121.23) and (269.05,121.23) .. (270.72,119.56) .. controls (272.39,117.89) and (274.05,117.89) .. (275.72,119.56) .. controls (277.39,121.23) and (279.05,121.23) .. (280.72,119.56) .. controls (282.39,117.89) and (284.05,117.89) .. (285.72,119.56) .. controls (287.39,121.23) and (289.05,121.23) .. (290.72,119.56) .. controls (292.39,117.89) and (294.05,117.89) .. (295.72,119.56) .. controls (297.39,121.23) and (299.05,121.23) .. (300.72,119.56) .. controls (302.39,117.89) and (304.05,117.89) .. (305.72,119.56) .. controls (307.39,121.23) and (309.05,121.23) .. (310.72,119.56) .. controls (312.39,117.89) and (314.05,117.89) .. (315.72,119.56) .. controls (317.39,121.23) and (319.05,121.23) .. (320.72,119.56) .. controls (322.39,117.89) and (324.05,117.89) .. (325.72,119.56) .. controls (327.39,121.23) and (329.05,121.23) .. (330.72,119.56) .. controls (332.39,117.89) and (334.05,117.89) .. (335.72,119.56) .. controls (337.39,121.23) and (339.05,121.23) .. (340.72,119.56) -- (344.38,119.56) -- (344.38,119.56) ;
%Shape: Rectangle [id:dp4339624008787576] 
\draw  [draw opacity=0][fill={rgb, 255:red, 155; green, 155; blue, 155 }  ,fill opacity=0.3 ] (60.72,120) -- (306.5,120) -- (306.5,248.5) -- (60.72,248.5) -- cycle ;
%Shape: Right Triangle [id:dp2679435450333587] 
\draw  [draw opacity=0][fill={rgb, 255:red, 255; green, 255; blue, 255 }  ,fill opacity=1 ] (306.5,248.5) -- (61.5,119.56) -- (61.5,248.5) -- cycle ;
%Curve Lines [id:da9865465504839325] 
\draw [color={rgb, 255:red, 74; green, 144; blue, 226 }  ,draw opacity=1 ][fill={rgb, 255:red, 255; green, 255; blue, 255 }  ,fill opacity=1 ]   (306.5,251) .. controls (268.86,198.52) and (206.56,151.38) .. (152.94,136.24) .. controls (99.31,121.09) and (94.91,122.37) .. (60.72,119.56) ;
%Straight Lines [id:da9958300527776128] 
\draw    (34.66,211.14) -- (344.12,211.2) ;
\draw [shift={(346.12,211.2)}, rotate = 180.01] [color={rgb, 255:red, 0; green, 0; blue, 0 }  ][line width=0.75]    (10.93,-3.29) .. controls (6.95,-1.4) and (3.31,-0.3) .. (0,0) .. controls (3.31,0.3) and (6.95,1.4) .. (10.93,3.29)   ;
%Straight Lines [id:da12195985037790069] 
\draw    (61.4,263) -- (60.18,92.25) ;
\draw [shift={(60.17,90.25)}, rotate = 89.59] [color={rgb, 255:red, 0; green, 0; blue, 0 }  ][line width=0.75]    (10.93,-3.29) .. controls (6.95,-1.4) and (3.31,-0.3) .. (0,0) .. controls (3.31,0.3) and (6.95,1.4) .. (10.93,3.29)   ;

% Text Node
\draw (234.5,141) node [anchor=north west][inner sep=0.75pt]  [color={rgb, 255:red, 128; green, 128; blue, 128 }  ,opacity=1 ] [align=left] {$\displaystyle R< 2m$};
% Text Node
\draw (58.72,119.56) node [anchor=east] [inner sep=0.75pt]  [font=\small] [align=left] {$\displaystyle t_{s}$};
% Text Node
\draw (348.12,211.2) node [anchor=west] [inner sep=0.75pt]  [font=\small] [align=left] {$\displaystyle r$};
% Text Node
\draw (60.17,87.25) node [anchor=south] [inner sep=0.75pt]  [font=\small] [align=left] {$\displaystyle t$};

\end{tikzpicture}
}
\subfigure[\ ]
{
\tikzset{every picture/.style={line width=0.75pt}} %set default line width to 0.75pt        
\begin{tikzpicture}[x=0.75pt,y=0.75pt,yscale=-0.8,xscale=0.8]
%uncomment if require: \path (0,343); %set diagram left start at 0, and has height of 343

%Straight Lines [id:da9958300527776128] 
\draw    (34.66,211.14) -- (344.12,211.2) ;
\draw [shift={(346.12,211.2)}, rotate = 180.01] [color={rgb, 255:red, 0; green, 0; blue, 0 }  ][line width=0.75]    (10.93,-3.29) .. controls (6.95,-1.4) and (3.31,-0.3) .. (0,0) .. controls (3.31,0.3) and (6.95,1.4) .. (10.93,3.29)   ;
%Straight Lines [id:da12195985037790069] 
\draw    (61.4,263) -- (60.18,92.25) ;
\draw [shift={(60.17,90.25)}, rotate = 89.59] [color={rgb, 255:red, 0; green, 0; blue, 0 }  ][line width=0.75]    (10.93,-3.29) .. controls (6.95,-1.4) and (3.31,-0.3) .. (0,0) .. controls (3.31,0.3) and (6.95,1.4) .. (10.93,3.29)   ;
%Straight Lines [id:da9627312740179843] 
\draw [color={rgb, 255:red, 208; green, 2; blue, 27 }  ,draw opacity=1 ]   (60.72,119.56) .. controls (62.39,117.89) and (64.05,117.89) .. (65.72,119.56) .. controls (67.39,121.23) and (69.05,121.23) .. (70.72,119.56) .. controls (72.39,117.89) and (74.05,117.89) .. (75.72,119.56) .. controls (77.39,121.23) and (79.05,121.23) .. (80.72,119.56) .. controls (82.39,117.89) and (84.05,117.89) .. (85.72,119.56) .. controls (87.39,121.23) and (89.05,121.23) .. (90.72,119.56) .. controls (92.39,117.89) and (94.05,117.89) .. (95.72,119.56) .. controls (97.39,121.23) and (99.05,121.23) .. (100.72,119.56) .. controls (102.39,117.89) and (104.05,117.89) .. (105.72,119.56) .. controls (107.39,121.23) and (109.05,121.23) .. (110.72,119.56) .. controls (112.39,117.89) and (114.05,117.89) .. (115.72,119.56) .. controls (117.39,121.23) and (119.05,121.23) .. (120.72,119.56) .. controls (122.39,117.89) and (124.05,117.89) .. (125.72,119.56) .. controls (127.39,121.23) and (129.05,121.23) .. (130.72,119.56) .. controls (132.39,117.89) and (134.05,117.89) .. (135.72,119.56) .. controls (137.39,121.23) and (139.05,121.23) .. (140.72,119.56) .. controls (142.39,117.89) and (144.05,117.89) .. (145.72,119.56) .. controls (147.39,121.23) and (149.05,121.23) .. (150.72,119.56) .. controls (152.39,117.89) and (154.05,117.89) .. (155.72,119.56) .. controls (157.39,121.23) and (159.05,121.23) .. (160.72,119.56) .. controls (162.39,117.89) and (164.05,117.89) .. (165.72,119.56) .. controls (167.39,121.23) and (169.05,121.23) .. (170.72,119.56) .. controls (172.39,117.89) and (174.05,117.89) .. (175.72,119.56) .. controls (177.39,121.23) and (179.05,121.23) .. (180.72,119.56) .. controls (182.39,117.89) and (184.05,117.89) .. (185.72,119.56) .. controls (187.39,121.23) and (189.05,121.23) .. (190.72,119.56) .. controls (192.39,117.89) and (194.05,117.89) .. (195.72,119.56) .. controls (197.39,121.23) and (199.05,121.23) .. (200.72,119.56) .. controls (202.39,117.89) and (204.05,117.89) .. (205.72,119.56) .. controls (207.39,121.23) and (209.05,121.23) .. (210.72,119.56) .. controls (212.39,117.89) and (214.05,117.89) .. (215.72,119.56) .. controls (217.39,121.23) and (219.05,121.23) .. (220.72,119.56) .. controls (222.39,117.89) and (224.05,117.89) .. (225.72,119.56) .. controls (227.39,121.23) and (229.05,121.23) .. (230.72,119.56) .. controls (232.39,117.89) and (234.05,117.89) .. (235.72,119.56) .. controls (237.39,121.23) and (239.05,121.23) .. (240.72,119.56) .. controls (242.39,117.89) and (244.05,117.89) .. (245.72,119.56) .. controls (247.39,121.23) and (249.05,121.23) .. (250.72,119.56) .. controls (252.39,117.89) and (254.05,117.89) .. (255.72,119.56) .. controls (257.39,121.23) and (259.05,121.23) .. (260.72,119.56) .. controls (262.39,117.89) and (264.05,117.89) .. (265.72,119.56) .. controls (267.39,121.23) and (269.05,121.23) .. (270.72,119.56) .. controls (272.39,117.89) and (274.05,117.89) .. (275.72,119.56) .. controls (277.39,121.23) and (279.05,121.23) .. (280.72,119.56) .. controls (282.39,117.89) and (284.05,117.89) .. (285.72,119.56) .. controls (287.39,121.23) and (289.05,121.23) .. (290.72,119.56) .. controls (292.39,117.89) and (294.05,117.89) .. (295.72,119.56) .. controls (297.39,121.23) and (299.05,121.23) .. (300.72,119.56) .. controls (302.39,117.89) and (304.05,117.89) .. (305.72,119.56) .. controls (307.39,121.23) and (309.05,121.23) .. (310.72,119.56) .. controls (312.39,117.89) and (314.05,117.89) .. (315.72,119.56) .. controls (317.39,121.23) and (319.05,121.23) .. (320.72,119.56) .. controls (322.39,117.89) and (324.05,117.89) .. (325.72,119.56) .. controls (327.39,121.23) and (329.05,121.23) .. (330.72,119.56) .. controls (332.39,117.89) and (334.05,117.89) .. (335.72,119.56) .. controls (337.39,121.23) and (339.05,121.23) .. (340.72,119.56) -- (344.38,119.56) -- (344.38,119.56) ;
%Curve Lines [id:da9865465504839325] 
\draw [color={rgb, 255:red, 74; green, 144; blue, 226 }  ,draw opacity=1 ][fill={rgb, 255:red, 155; green, 155; blue, 155 }  ,fill opacity=0.3 ]   (117.5,261) .. controls (152.15,208.32) and (209.5,161) .. (258.86,145.8) .. controls (308.22,130.59) and (311.56,133.01) .. (343.03,130.19) ;
%Shape: Right Triangle [id:dp2679435450333587] 
\draw  [draw opacity=0][fill={rgb, 255:red, 155; green, 155; blue, 155 }  ,fill opacity=0.3 ] (117.5,261) -- (343.03,130.19) -- (343.03,261) -- cycle ;
%Straight Lines [id:da3489945859212361] 
\draw  [dash pattern={on 4.5pt off 4.5pt}]  (134.34,120.51) -- (134.99,214.48)(131.34,120.53) -- (131.99,214.5) ;
%Shape: Rectangle [id:dp657165324550664] 
\draw  [draw opacity=0][fill={rgb, 255:red, 184; green, 233; blue, 134 }  ,fill opacity=0.3 ] (61.5,119.56) -- (133.16,119.56) -- (133.16,211.29) -- (61.5,211.29) -- cycle ;

% Text Node
\draw (243,184) node [anchor=north west][inner sep=0.75pt]  [color={rgb, 255:red, 128; green, 128; blue, 128 }  ,opacity=1 ] [align=left] {$\displaystyle R< 2m$};
% Text Node
\draw (58.72,119.56) node [anchor=east] [inner sep=0.75pt]  [font=\small] [align=left] {$\displaystyle t_{s}$};
% Text Node
\draw (133.16,214.29) node [anchor=north] [inner sep=0.75pt]  [font=\small] [align=left] {$\displaystyle r_{b}$};
% Text Node
\draw (348.12,211.2) node [anchor=west] [inner sep=0.75pt]  [font=\small] [align=left] {$\displaystyle r$};
% Text Node
\draw (60.17,87.25) node [anchor=south] [inner sep=0.75pt]  [font=\small] [align=left] {$\displaystyle t$};
\end{tikzpicture}
}
    \caption{(a): if $\dot a(t)$ is unbounded, at the center of the model the horizon (blue line) forms at the same comoving time of the singularity (red line). (b): when $\dot a(t)$ is bounded in the approach to the singularity, one can choose a sufficiently small neighborhood of the center such that $R>2m $ and perform a junction of the interior (the green region) with an exterior spacetime.}
    \label{fig:horizon}
\end{figure}

This fact also holds when $\kappa=-1$.
By applying the definition of the Misner-Sharp mass $1-2m/R=g(\nabla R,\nabla R)$ to the the metric \eqref{eq:RWmodel} we obtain the following implications:
\begin{equation}\label{eq:horizon}
    R>2m \quad\Longleftrightarrow\quad 1+\dot a(t)  r - \kappa r^2>0\quad\Longleftrightarrow\quad -\dot a(t)<\frac{1-\kappa r^2}{r}
\end{equation}
(recall that $\dot a(t)$ is the derivative w.r.t. comoving time $t$ and it is negative in our model). 

In our situation $\dot a(t)$ is related, recalling \eqref{eq:def-xwz}, to the unknown $z(\tau)$, whose behavior is given by equation
\eqref{eq:mainSys3-d}, which we recall reads as follows:
\begin{equation*}
\frac{\de z}{\de\tau}
		= \frac{z}{\sqrt{6}}
		\left(1 - 3w^2 + \kappa z^2 \right).
\end{equation*}
Again, let us briefly review the case $\kappa=0$. The generic situation is given by $w^2\to 1$, in such a way that $z\sim e^{-\tau\sqrt{2/3}}$. This means that $z\to 0$ as $\tau\to+\infty$ and then $\dot a$ diverges, resulting in the formation of the horizon and then in a black hole. Viceversa, in the non generic situation $w\to u(1)<1$, we get 
$$z\sim e^{(1-3 u(1)^2)\tau}.
$$ 
Therefore, if 
$$
1-3 u(1)^2>0, 
$$
(for example if $u(1)=0$) then $z$ diverges and then $\dot a$ is bounded in the approach to the singularity. Because of this, the model can be built in such a way that the horizon does not form, and the singularity is naked. 

We summarize the above findings in the following result.
\begin{proposition}\label{prop:hor}
    If the limit
$$\lim_{\tau\to+\infty}3w(\tau)^2-\kappa z(\tau)^2=:\ell
$$
exists and it is positive, then $t_s\in\mathbb R$, i.e., the solution collapses in a finite amount of comoving time.

Moreover, if $\ell>1$ then a horizon forms, and the model exhibits a black hole. Otherwise, if $\ell<1$, then choosing the boundary of the internal solution sufficiently small, the horizon does not form, resulting in a naked singularity.
\end{proposition}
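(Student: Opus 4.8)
The plan is to split the statement into the finite-collapse-time claim and the horizon dichotomy, treating the latter through the asymptotic behaviour of $z(\tau)$.

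First I would establish $t_s\in\mathbb R$. Starting from the closed form \eqref{eq:xtau}, namely
\[
x(\tau)=x_0\exp\!\left(-\frac1{\sqrt6}\int_0^\tau\big(3w(\tilde\tau)^2-\kappa z(\tilde\tau)^2\big)\,\de\tilde\tau\right),
\]
I would exploit the hypothesis $3w^2-\kappa z^2\to\ell>0$: fixing $\varepsilon\in(0,\ell)$, there is $\tau_0$ with $3w^2-\kappa z^2>\ell-\varepsilon$ for $\tau\ge\tau_0$, whence $x(\tau)\le C\,e^{-(\ell-\varepsilon)\tau/\sqrt6}$ for a suitable constant $C$. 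This exponential bound makes the integrand in \eqref{eq:ts} integrable, so $t_s<\infty$ and the singularity forms in finite comoving time, regardless of whether $\ell$ exceeds $1$.

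Next I would extract the asymptotics of $z$ from \eqref{eq:mainSys3-d}. Writing it as $\tfrac{\de}{\de\tau}\ln z=\tfrac1{\sqrt6}\big(1-(3w^2-\kappa z^2)\big)$ and integrating, the right-hand side converges to $(1-\ell)/\sqrt6$ by hypothesis, so a Ces\`aro averaging argument gives $\tfrac1\tau\ln z(\tau)\to(1-\ell)/\sqrt6$. Hence $z\to0$ when $\ell>1$ and $z\to+\infty$ when $\ell<1$. Recalling from \eqref{eq:def-xwz} that $-\dot a=1/z$, in the first case $-\dot a\to+\infty$ is unbounded, while in the second $z$ is continuous, positive and divergent, so $\inf_{\tau\ge0}z>0$ and $-\dot a$ is globally bounded, say $-\dot a\le B$. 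I would then feed these into the horizon criterion \eqref{eq:horizon}, $R>2m\iff -\dot a<\tfrac{1-\kappa r^2}{r}$. If $\ell>1$, for every fixed boundary radius $r_b$ the unbounded growth of $-\dot a$ eventually violates the inequality at $r=r_b$, so a trapped region appears and the endstate is a black hole. If $\ell<1$, I would use that $r\mapsto\tfrac{1-\kappa r^2}{r}$ diverges as $r\to0^+$ (for both $\kappa=0$ and $\kappa=-1$) to pick $r_b$ small enough that $\tfrac{1-\kappa r_b^2}{r_b}>B$; then $R>2m$ holds throughout the interior for all $\tau$, the apparent horizon never forms, and the finite-time singularity is naked, consistent with \cite{Giambo:2005se}.

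The main obstacle I anticipate is the rigorous passage from the limit of $3w^2-\kappa z^2$ to the boundedness (resp.\ unboundedness) of $-\dot a$: the averaging step controls only the exponential rate of $z$, and in the $\ell<1$ case one must separately argue that $z$ stays bounded away from zero on the whole of $[0,\infty)$, and not merely near infinity, in order to guarantee a uniform bound on $-\dot a$. The borderline value $\ell=1$ falls outside the argument and would require a finer analysis of the subleading behaviour of $\int_0^\tau\big(1-(3w^2-\kappa z^2)\big)\,\de\tilde\tau$.
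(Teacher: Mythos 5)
Your proof is correct and follows essentially the same route as the paper: finite collapse time from the exponential decay of $x$ given by \eqref{eq:xtau}, and the horizon dichotomy from the asymptotics of $z$ in \eqref{eq:mainSys3-d} combined with the criterion \eqref{eq:horizon} and the small-boundary construction of \cite{Giambo:2005se}. Your Ces\`aro-averaging step simply makes rigorous, for a general limit $\ell$, the exponential estimates that the paper carries out explicitly only in the special cases $w^2\to 1$ and $w\to u(\pm 1)$.
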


In view of the analysis of the qualitative behavior performed in Sections \ref{sec:the_flat_case_kappa_0_} and \ref{sec:collapse_of_the_open_spatial_topology_kappa_1_}, we can conclude with the following theorem.
\begin{teo}\label{thm:endstate}
    If $u(-1)>-1$ or $u(1)<1$, then the collapsing scalar field with open spatial topology $($i.e., $\kappa=0$ or $\kappa = -1$$)$ generically collapses in a finite amount of comoving time forming a black hole. 
\end{teo}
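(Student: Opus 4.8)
The plan is to read this theorem as a synthesis of the asymptotic analysis carried out in Sections \ref{sec:the_flat_case_kappa_0_} and \ref{sec:collapse_of_the_open_spatial_topology_kappa_1_} with the horizon--formation criterion of Proposition \ref{prop:hor}. The entire content reduces to showing that the hypothesis $u(-1)>-1$ or $u(1)<1$ forces the generic limit $\ell\coloneqq\lim_{\tau\to+\infty}(3w^2-\kappa z^2)$ to equal $3$, so that both conclusions of Proposition \ref{prop:hor}, namely finite collapse time and horizon formation, apply simultaneously.

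First I would invoke, for $\kappa=0$, Proposition \ref{prop:sf}, and for $\kappa=-1$ the summarizing statement closing Section \ref{sec:collapse_of_the_open_spatial_topology_kappa_1_}, both of which assert that under exactly the hypothesis $u(-1)>-1$ or $u(1)<1$ the scalar field generically diverges, with $s^2\to 1$ and $w^2\to 1$, and, in the case $\kappa=-1$, additionally $z^2\to 0$. Here ``generically'' is understood, as throughout the paper, as the complement of the initial data converging to the unstable equilibria identified in those statements; the fact that this hypothesis is precisely the one excluding the limit-cycle behaviour of Example \ref{ex:tanh} is what makes the generic conclusion available in both topologies.

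Next I would compute $\ell$ directly from these asymptotics. For $\kappa=0$ one has $3w^2-\kappa z^2=3w^2\to 3$, while for $\kappa=-1$ one has $3w^2-\kappa z^2=3w^2+z^2\to 3+0=3$; in particular the limit exists and equals $3$ in both cases. Since $\ell=3>0$, Proposition \ref{prop:hor} yields $t_s\in\mathbb R$, i.e.\ the collapse is complete in finite comoving time; and since moreover $\ell=3>1$, the same proposition guarantees that the apparent horizon forms, so the endstate is a black hole.

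The step I expect to require the most care is not the evaluation of $\ell$, which is immediate, but the bookkeeping of genericity: one must verify that the exceptional data excluded in the $\kappa=0$ and $\kappa=-1$ analyses are mutually compatible, so that a single generic set of initial data realizes $w^2\to 1$ in either topology, and that the convergence $z^2\to 0$ for $\kappa=-1$ is genuinely generic rather than merely admissible. Once this is secured, the existence of $\ell$ and its value follow with no further estimates, and the theorem is a direct corollary of Propositions \ref{prop:sf} and \ref{prop:hor}.
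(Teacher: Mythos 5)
Your proposal is correct and follows essentially the same route as the paper, which presents Theorem \ref{thm:endstate} precisely as a corollary of the generic asymptotics $w^2\to 1$ (and $z^2\to 0$ for $\kappa=-1$) from Proposition \ref{prop:sf} and its $\kappa=-1$ counterpart, combined with Proposition \ref{prop:hor} applied with $\ell=3>1$. Your only superfluous worry is the ``mutual compatibility'' of the exceptional data across the two topologies: the cases $\kappa=0$ and $\kappa=-1$ are separate models treated independently, so no such compatibility is needed.
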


\section{Horizon formation genericity}%
\label{sec:horizon_formation_genericity}
In the results stated above,
genericity is meant with respect
to the initial data of the dynamical system:
even if there may exist solutions that don't develop a horizon, 
and therefore they collapse to a naked singularity,
a small perturbation of its initial data will restore black hole formation.

On the other side, one can conceive the same model but prescribing the behaviour of some physical quantity, e.g., the energy density,
with respect to the scale factor, and see
whether the endstate associated to a given prescription is stable w.r.t. perturbation of the prescription itself.
This analysis has been discussed --
at least for the flat case $\kappa = 0$ --
in the paper \cite{Giambo:2005se}
where a function $\psi(a)$ is prescribed such that
\begin{equation}
	\label{eq:psia}
	\dot{a} = - \psi(a).
\end{equation}
In the general case, field equations \eqref{eq:Einstein-0}--\eqref{eq:Einstein-i} yield the following expressions
for $\displaystyle \left(\frac{\de\phi}{\de a}\right)^2$
and $V = V(a)$
(see \cite{Giambo:2005se,Giambo2024}):
\begin{subequations}
	\begin{align}
		\left(\frac{\de\phi}{\de a}\right)^2
		&= \frac{2}{a^2}\left[1 - a\frac{\psi'(a)}{\psi(a)}
			+ \frac{\kappa}{\psi^2(a)}
		\right],
		\label{eq:dphida2}\\
		V(a) &= 
		2 \left(\frac{\psi(a)}{a}\right)^2
		\left[1 + \frac{1}{2}\frac{a\psi'(a)}{\psi(a)}
			+ \frac{\kappa}{\psi^2(a)}
		\right].
		\label{eq:Va}
	\end{align}
\end{subequations}
Within the above framework,
generical examples of collapse can conceived, leading to naked singularities,
where genericity now is intended w.r.t. perturbation of some parameter entering the relation~\eqref{eq:psia}. 
Let us review a few examples already known in literature,
verifying that they correspond to non generic situation (with respect to perturbations of initial data).

\begin{example}\label{ex:Goswami}
In the paper \cite{Goswami:2005fu}, the loop quantum gravity modification of a collapsing scalar field in the case $\kappa = 0$ is considered, with the additional ansatz that the the energy density $\rho$
satisfies the following identity:
\begin{equation}
    \label{eq:rho-example1}
    \rho = \frac{1}{a^{\nu}},
\end{equation}
where $\nu$ is a positive parameter.
It is shown that the additional condition $\nu<2$ -- that is an open condition, hence stable with respect to perturbation of $\nu$ -- produces a solution collapsing in a finite amount of comoving time, where the horizon does not form and then the singularity is naked. Let us reconsider this example in the framework developed in the previous sections. 

Recalling \eqref{eq:Einstein-0} we have
	\[
		\rho = \frac{1}{2}\dot{\phi}^2 + V(\phi)
		= 3\left(\frac{\dot{a}}{a}\right)^2,
	\]
	which implies that 
	\[
		\dot{a} = -\frac{1}{\sqrt{3}}a^{1-\frac{\nu}{2}},
	\]
	and then the function $\psi(a)$ in~\eqref{eq:psia} is given by:
	\[
		\psi(a) = \frac{1}{\sqrt{3}}a^{1-\frac{\nu}{2}}.
	\]
	By~\eqref{eq:dphida2}, we have
	\[
	\left(\frac{\de\phi}{\de a}\right)^2
	= \frac{2}{a^2}\left[-1 + 1 + \frac{\nu}{2}\right]
	= \frac{\nu}{a^2},
	\]
	and then $\frac{\de\phi}{\de a} = \pm \frac{\sqrt{\nu}}{a}$.
	Taking for instance the positive root, we have
	$\phi = \ln a^{\sqrt{\nu}}$,
	i.e., $a = e^{\phi/\sqrt{\nu}}$,
	where the iteration constant has been chosen for simplicity. 
	Then the collapse happens for $\phi \to -\infty$.
	Using~\eqref{eq:Va}
	we have
	\[
		V(a) = \frac{2}{3}a^{-\nu}
		\left(1 + \frac{1}{2}
			\left(1 - \frac{\nu}{2}\right)
		\right)
		= \left(1 - \frac{\nu}{6}\right)\frac{1}{a^{\nu}}.
	\]
	Combining this formula with the previous result, we get
	\begin{equation}
		\label{eq:Vphi-example1}
		V(\phi) = \left(1 - \frac{\nu}{6}\right)e^{\sqrt{\nu}|\phi|},
	\end{equation}
	where we have introduced the absolute value
	to get rid of the ambiguity coming from the square root
	extraction seen above.
	From~\eqref{eq:Vphi-example1} we derive
	\begin{equation}
		\label{eq:example1-uphi}
		u(\phi) = \sqrt{\frac{\nu}{6}}\sgn(\phi),
	\end{equation}
	and then, by using the notation 
	introduced in Section \ref{sec:the_flat_case_kappa_0_},
	we have $u(-1) = -\sqrt{\nu/6}$
	and $u(1) = \sqrt{\nu/6}$.

	On the other hand, we have
	\[
		w = - \frac{\dot{\phi}}{\sqrt{6}\, h}
		= - \frac{\de \phi}{\de a}\dot{a} \frac{a}{\sqrt{6}\dot{a}}
		= - \frac{a}{\sqrt{6}}\frac{\sqrt{\nu}}{a}
		= -\sqrt{\frac{\nu}{6}}.
	\]
	Then, the prescription~\eqref{eq:rho-example1}
	corresponds to the situation where 
	$w \to u(-1)$,
	i.e., one of the non generic choices discussed in 
	Theorem~\ref{theorem:x0}.
 
	In view of the results of Section \ref{sec:global_model_the_exterior}, this solution leads to a naked
	singularity when $1 - 3u^2(-1) > 0$,
	i.e., $\nu < 2$,
	as remarked in \cite{Goswami:2005fu}.
\end{example}

\begin{example}\label{ex:Baier}
   The paper \cite{Baier:2014ita} presents a modification of the situation presented in the previous example, considering the Einstein field equations explicitly adding a (negative) cosmological constant. As a matter of fact, due to the form of the energy momentum tensor \eqref{eq:momentumTensor}, the presence of $\Lambda$ adds a constant to the potential function $V(\phi)$.
   With this in mind, one has to expect a  situation similar to the one that emerges from the previous example. 

   In \cite{Baier:2014ita}, it is prescribed the following functional dependence of the energy density with respect to the scale factor:
   \begin{equation}
   \label{eq:Baier}
   \rho(a)=3 a^{2 \beta -2},
\end{equation}
with $\beta<1$ so that $\rho$ diverges as $a\to 0^+$. This amounts to prescribe the function $\psi(a)$ in \eqref{eq:psia} as follows:
\begin{equation}\label{eq:rho-Baier}
\psi(a)=\sqrt{a^{2 \beta }-\frac{a^2}{l^2}},
\end{equation}
where the cosmological constant $\Lambda$ is set equal to $-\tfrac3{l^2}$, using the notation of \cite{Baier:2014ita}. 
Since $\beta<1$, $\psi(a)\simeq a^\beta$ as $a\to 0^+$, and therefore $1/\psi(a)$ is integrable in a right neighborhood of $a=0$, the singularity forms in a finite amount of comoving time.
Moreover, assuming also $\beta>0$, $\psi(a)$ is bounded and therefore a naked singularity can be built from this collapse, exactly as in the previous example. Again, one can reconstruct back the potential $V(\phi)$, whose complete expression is given in \cite[(23)]{Baier:2014ita} and has an exponential growth. 
In this case, we have (embodying $\Lambda$ in $V$) 
$$
\dot\phi^2=
2 (1-\beta) a^{2 \beta -2},
\qquad V=(\beta +2) a^{2 \beta -2}-\frac{3}{l^2},
$$
and then we can find $w$ as a function of $a$, satifying the following relation:
$$
w^2=\frac{(\beta -1) l^2 a^{2 \beta }}{3 \left(a^2-l^2 a^{2 \beta }\right)}\xrightarrow{a\to 0}\frac{1-\beta }{3}.
$$
At the same time, we can compute $u^2(1)$,
obtaining 
\begin{multline*}
u^2(1)=\lim_{a\to 0^+}\left(\frac{1}{\sqrt{6}}\frac{\de V}{\de a}\frac{(-\psi(a))}{\dot\phi}\right)^2\\
=\lim_{a\to 0^+}\frac{(1 -\beta) (\beta +2)^2 l^2 a^{2 \beta } \left(l^2 a^{2 \beta }-a^2\right)}{3 \left((\beta +2) l^2 a^{2 \beta }-3 a^2\right)^2}=\frac{1-\beta }{3}.
\end{multline*}
Since these two limits are equal, we deduce that $w\to u(1)<1$, that is once again one of the non generic situations occurring in the $\kappa=0$ case. We deduce as well that the naked singularity arising from the functional dependence \eqref{eq:rho-Baier} is nongeneric with respect to perturbation of the initial conditions. 
\end{example}

\begin{example}\label{ex:Mosani}
	In \cite{Mosani:2021czj} a collapsing scalar field in the flat case ($\kappa=0$) is considered, where it is prescribed
    the following behaviour:
	\begin{equation}
		\label{eq:example2-rho}
		\rho(a) = 64\lambda(\gamma - \log a)^2,
	\end{equation}
	where $\lambda$ and $\gamma$ are positive parameters, therefore again subjected to open -- hence stable -- conditions. The resulting collapse is shown not to form a singularity in a finite amount of time, since $a(t)$ vanishes as $t\to+\infty$. Again, let us discuss this example in view of the results stated in the previous sections. 

	Arguing as in Examples \ref{ex:Goswami} and \ref{ex:Baier}, we have
	\[
		\psi(a) = 8a\sqrt{\frac{\lambda}{3}}(\gamma - \log a),
	\]
	\[
	\left(\frac{\de\phi}{\de a}\right)^2
	= \frac{2}{a^2(\gamma - \log a)}
	\implies \phi(a) = -2\sqrt{2(\gamma - \log a)}
	\implies a = e^{\gamma - \frac{\phi^2}{8}},
	\]
	from which we get
	\[
		V(a) = \frac{64}{3}
		\lambda(3\gamma - 1 - 3\log a)
		(\gamma - \log a).
	\]
	Combining these equations with the previous result yields to the expression:
	\begin{equation}
		\label{eq:example2-Vphi}
		V(\phi) = \lambda\left(\phi^4-\frac{8}{3}\phi^2\right),
	\end{equation}
	that produces $u(-1) = u(1) = 0$.
        On the other hand, we have
	\[
		w = - \frac{a}{\sqrt{6}}\frac{\de\phi}{\de a}
		= - \frac{a}{\sqrt{6}}\frac{\sqrt{2}}{a\sqrt{\gamma - \log a}}
		= - \frac{1}{\sqrt{3(\gamma - \log a)}}
		\xrightarrow[]{a \to 0} 0,
	\]
	once again corresponding to one of the non generic
	choices described in Theorem~\ref{theorem:x0}, where the apparent horizon does not form.

We observe that in this case we have, from \eqref{eq:mainSys4-b} and the above expression for $w$, that
$$
\frac{\de x}{\de \tau}=-\frac1{\sqrt 6}\frac{x}{\gamma-\log a}\Rightarrow\frac{\de x}{\de a}=\frac{x}{a(\gamma-\log a)}.
$$
Hence, $x(a)=\tfrac{c}{\gamma-\log a}$,
where $c$ is an integration constant, and since 
$$
\int_0^\infty x\,\mathrm d\tau=-\sqrt{6}c\int_0^{a_0}\frac{\mathrm da}{a(\gamma-\log a)}
$$
diverges, then the solution indefinitely collapses and remains regular, as remarked at the beginning of the present example. 
 
\end{example}

\begin{example}\label{ex:3} In the wake of the previous examples, we can build a similar model when $\kappa=-1$, prescribing 
\begin{equation}\label{eq:example3-rho}
\rho(a)=\frac{3-c\, a^{2 \gamma ^2}}{a^2},
\end{equation}
with $c,\gamma\ne 0$. Arguing as before, and recalling \eqref{eq:Einstein-0} we can find that 
$$
\dot a=-\sqrt{\frac{1}{3} a^2 \rho (a)+1}=
-\sqrt{2-\frac{1}{3} c\, a^{2 \gamma ^2}}.
$$
We observe that the function
$$
a \mapsto -\frac{1}{\sqrt{2-\frac{1}{3} c\, a^{2 \gamma ^2}}}
$$
is regular, hence integrable in a right neighborhood of $a=0$, and then the singularity is reached in a finite amount of comoving time. Moreover, $\dot a$ is bounded as $a\to 0^+$, which means that a model without horizon can be built as well.
However, denoting by $w_0$ and $z_0$ the limits of $w(\tau)$ and $z(\tau)$, we can compute them using the limit of the scale factor as $a\to 0^+$:
\begin{multline*}
1-3w_0^2-z_0^2=\lim_{a\to 0^+}\left[1-
-3 \left(-\frac{a \dot\phi}{\sqrt{6} (-\psi (a))}\right)^2-\left(\frac{1}{\psi (a)}\right)^2
\right]\\
=\lim_{a\to 0^+}\gamma ^2 \left(1+\frac{6}{-6+c\, a^{2 \gamma ^2}}\right)=0,
\end{multline*}
which is again one of the nongeneric situations described in Section \ref{sec:collapse_of_the_open_spatial_topology_kappa_1_}. 

Observe that, once again, we can reconstruct back the potential $V(\phi)$ solving \eqref{eq:dphida2}, inverting $a$ as a function of $\phi$ and plugging it back in \eqref{eq:Va}. In this case the  solution cannot be written in explicit form, but  Figure \ref{fig:V} represents the behavior of $V$ for several choices of $\gamma$. 

\begin{figure}
\centering
\includegraphics[width=0.9\linewidth]{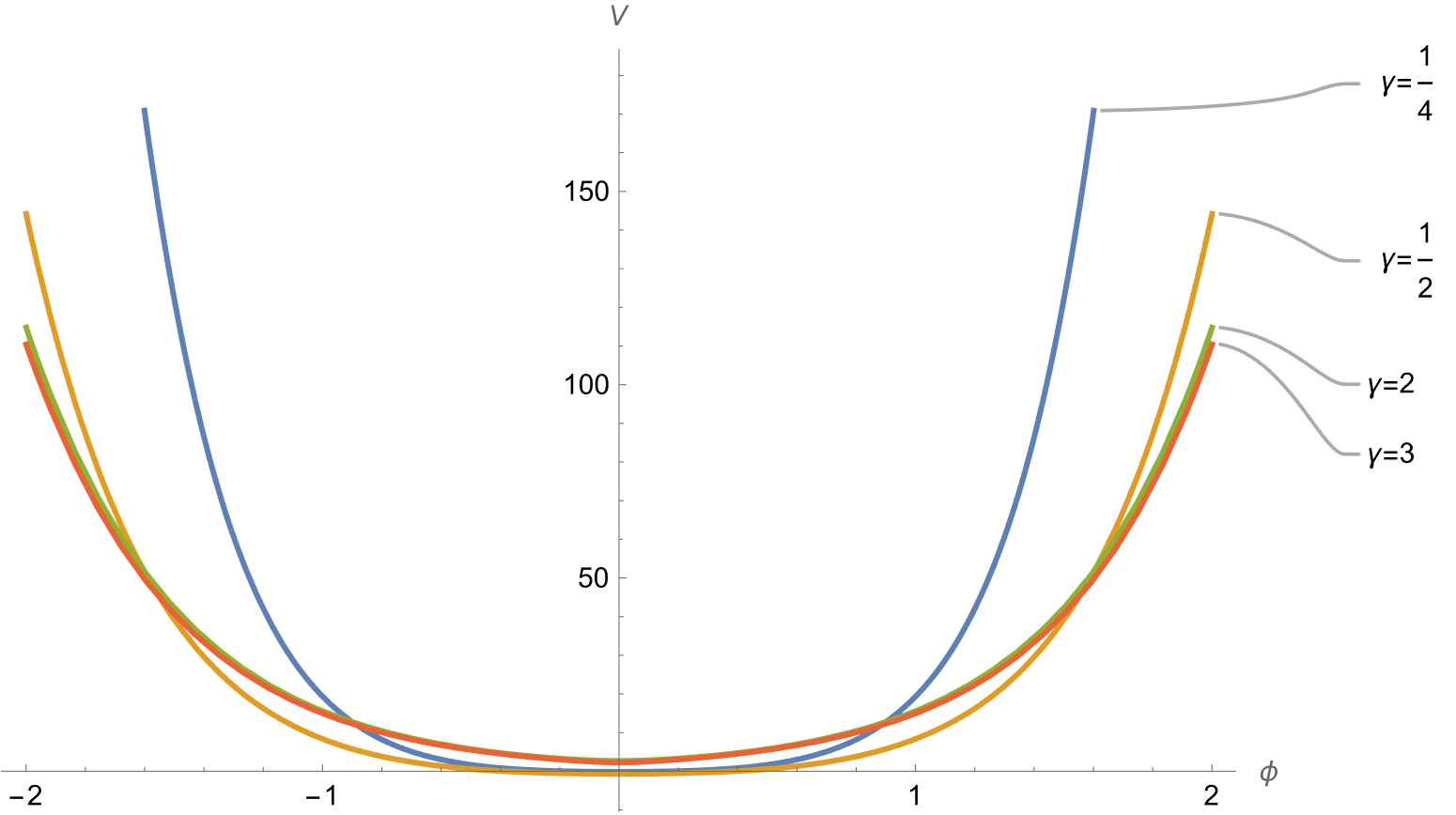}
    \caption{$V(\phi)$ for several choices of $\gamma$, see Example \ref{ex:3}. The constant $c$ in \eqref{eq:example3-rho} has been fine-tuned requiring continuity of $V'(\phi)$ in $\phi=0$.  }
    \label{fig:V}
\end{figure}

\end{example}

\section{Conclusions}\label{sec:outro}

It is well known that research on scalar fields in General Relativity has primarily developed within a cosmological framework, with significant attention given to $F(R)$ theories. Consequently, the study of scalar field evolution in a Robertson-Walker (RW) metric has predominantly focused on \textit{expanding} cosmologies (see, e.g., \cite{Harko:2013gha, Leon:2020pfy, Leon:2020ovw}). However, as outlined in the Introduction, the study of (inhomogeneous) scalar fields has also served as crucial toy models for investigating relativistic gravitational collapse \cite{Giambo2024}. Consequently, in recent years, there has been a growing interest in examining the behavior of \textit{collapsing} scalar fields within homogeneous settings.
This interest extends to alternative gravitational theories or classical relativistic contexts involving matter sources in addition to the scalar field itself (see, e.g., \cite{Giambo:2008sa, Tavakoli:2015llr, Chakrabarti:2020hrx}).

In this paper, we have reviewed the main findings concerning the collapse process, with an attempt in some cases to provide proofs of these theorems more closely aligned with dynamical systems theory, akin to pioneering works in this field \cite{Foster:1998sk, Miritzis:2003eu, Miritzis:2003ym}. For instance, Theorem \ref{theorem:x0}, pertaining to the flat case $\kappa=0$, was treated in \cite[Theorem 3.6]{GGM-JMP2007} with an \textit{ad-hoc} argument utilizing the Implicit Function Theorem. The consideration of closed spatial topologies has been approached from various perspectives, including alternative theories of gravity (see, e.g., \cite{Astashenok:2018bol, Ziaie:2021yox, Koutsoumbas:2015ekk}).

We demonstrated that similar techniques can be successfully applied, with necessary adjustments, to cases with open spatial topologies ($\kappa=1$), as shown in Theorem \ref{theorem:x0-wneg}. 
This is primarily due to Remark \ref{rem:Rem2}, which we discussed at the outset, highlighting the monotonicity of the Hubble function $h(t)$.
On the contrary, in the case of closed spatial topology ($\kappa=1$), an increasing collapse may lead to competition between the terms on the right-hand side of \eqref{eq:h-ODE}, potentially resulting in a reversal of the evolution towards an expansion phase. Moreover, the compactness of the set where the trajectories of the model are analyzed (see \eqref{eq:Constraint}) offers a reliable basis for cases where $\kappa\leq0$, but this compactness is not guaranteed when dealing with $\kappa=1$.

In Section \ref{sec:horizon_formation_genericity}, we have further analyzed the potential endstates of collapse for specific models, including notable cases documented in the literature in the flat case
\cite{Goswami:2005fu, Baier:2014ita, Mosani:2021czj},
with and without cosmological constant, and presenting an additional example for $\kappa=-1$. This discussion is closely tied to the genericity aspects of naked singularity occurrences in these models, which in turn relates to the well-known Penrose cosmic censorship conjecture. The determination of genericity hinges on the set where parameter perturbations are applied. When considering genericity concerning perturbations of initial conditions governing the dynamical systems associated with these models, cosmic censorship is restored, as evidenced by the results presented in Section \ref{sec:horizon_formation}, particularly in Theorem \ref{thm:endstate}. Nevertheless, it is important to note, as emphasized in the Introduction, that these non-generic naked singularities are gravitationally strong \cite{Guo:2020ked}.

The results presented in this paper rely on general assumptions about the potential function $V(\phi)$, particularly its behavior at infinity.
Specifically, we assume an at-most-exponential growth, as outlined in Assumption \ref{as:V}. Finally, it is noteworthy that the introduction of a cosmological constant, whether negative \cite{Baier:2014ita} or positive, does not fundamentally alter the qualitative framework.

%\bibliographystyle{plain}
%\bibliography{ourbib}

\printbibliography

@book{dumortier2006,
  title={Qualitative theory of planar differential systems},
  author={Dumortier, Freddy and Llibre, Jaume and Art{\'e}s, Joan C},
  volume={2},
  year={2006},
  publisher={Springer},
}

@article {GGM-JMP2007,
    AUTHOR = {Giamb\`o, Roberto and Giannoni, Fabio and Magli, Giulio},
     TITLE = {Genericity of black hole formation in the gravitational
              collapse of homogeneous self-interacting scalar fields},
   JOURNAL = {J. Math. Phys.},
  FJOURNAL = {Journal of Mathematical Physics},
    VOLUME = {49},
      YEAR = {2008},
    NUMBER = {4},
     PAGES = {042504, 19},
      ISSN = {0022-2488},
   MRCLASS = {83C75 (83C57)},
  MRNUMBER = {2412286},
MRREVIEWER = {Davide Batic},
       DOI = {10.1063/1.2907949},
       URL = {https://doi.org/10.1063/1.2907949},
}

@book{khalil,
  title={Nonlinear Systems},
  author={Khalil, H.K.},
  isbn={9780130673893},
  lccn={95045804},
  series={Pearson Education},
  url={https://books.google.it/books?id=t_d1QgAACAAJ},
  year={2002},
  publisher={Prentice Hall}
}

@article{Giambo:2008sa,
    author = "{Giamb\`o, Roberto}",
    title = "{Gravitational collapse of homogeneous perfect fluid in HOG theories}",
    eprint = "0811.4570",
    archivePrefix = "arXiv",
    primaryClass = "gr-qc",
    doi = "10.1063/1.3032755",
    journal = "J. Math. Phys.",
    volume = "50",
    pages = "012501",
    year = "2009"
}

@article{Giambo:2005se,
    author = "{Giamb\`o, Roberto}",
    title = "{Gravitational collapse of homogeneous scalar fields}",
    eprint = "gr-qc/0501013",
    archivePrefix = "arXiv",
    doi = "10.1088/0264-9381/22/11/023",
    journal = "Class. Quant. Grav.",
    volume = "22",
    pages = "2295--2305",
    year = "2005"
}

@Inbook{Giambo2024,
author="Giamb{\`o}, Roberto",
editor="Malafarina, Daniele
and Joshi, Pankaj S.",
title="Gravitational Collapse of a Spherical Scalar Field",
bookTitle="New Frontiers in Gravitational Collapse and Spacetime Singularities",
year="2024",
publisher="Springer Nature Singapore",
address="Singapore",
pages="141--173",
abstract="Examining the relativistic collapse of a spherical spacetime where gravity is coupled with a scalar field, this review provides a thorough analysis of some of the most relevant studies from both analytical and numerical perspectives. The discussion includes achievements made in this field, with a focus on those related to cosmic censorship, as well as recent perspectives on the topic.",
    eprint = "2307.01796",
    archivePrefix = "arXiv",
    primaryClass = "gr-qc",

isbn="978-981-97-1172-7",
doi="10.1007/978-981-97-1172-7_6",
url="https://doi.org/10.1007/978-981-97-1172-7_6"
}

@article{Goswami:2005fu,
    author = "Goswami, Rituparno and Joshi, Pankaj S. and Singh, Parampreet",
    title = "{Quantum evaporation of a naked singularity}",
    eprint = "gr-qc/0506129",
    archivePrefix = "arXiv",
    reportNumber = "IGPG-05-6-8",
    doi = "10.1103/PhysRevLett.96.031302",
    journal = "Phys. Rev. Lett.",
    volume = "96",
    pages = "031302",
    year = "2006"
}

@article{Mosani:2021czj,
    author = "Mosani, Karim and Dey, Dipanjan and Bhattacharya, Kaushik and Joshi, Pankaj S.",
    title = "{Singularity resolution in gravitational collapse}",
    eprint = "2110.07343",
    archivePrefix = "arXiv",
    primaryClass = "gr-qc",
    doi = "10.1103/PhysRevD.105.064048",
    journal = "Phys. Rev. D",
    volume = "105",
    number = "6",
    pages = "064048",
    year = "2022"
}

@article{Wang:1998qx,
    author = "Wang, Anzhong and Wu, Yumei",
    title = "{Generalized Vaidya solutions}",
    eprint = "gr-qc/9803038",
    archivePrefix = "arXiv",
    doi = "10.1023/A:1018819521971",
    journal = "Gen. Rel. Grav.",
    volume = "31",
    pages = "107",
    year = "1999"
}

@article{Christodoulou:1991yfa,
    author = "Christodoulou, Demetrios",
    title = "{The formation of black holes and singularities in spherically symmetric gravitational collapse}",
    doi = "10.1002/cpa.3160440305",
    journal = "Commun. Pure Appl. Math.",
    volume = "44",
    number = "3",
    pages = "339--373",
    year = "1991"
}

@article{https://doi.org/10.1002/cpa.3160460803,
author = {Christodoulou, Demetrios},
title = {Bounded variation solutions of the spherically symmetric einstein-scalar field equations},
journal = {Communications on Pure and Applied Mathematics},
volume = {46},
number = {8},
pages = {1131-1220},
doi = {https://doi.org/10.1002/cpa.3160460803},
url = {https://onlinelibrary.wiley.com/doi/abs/10.1002/cpa.3160460803},
eprint = {https://onlinelibrary.wiley.com/doi/pdf/10.1002/cpa.3160460803},
year = {1993}
}

@article{Christodoulou:1994hg,
    author = "Christodoulou, Demetrios",
    title = "{Examples of naked singularity formation in the gravitational collapse of a scalar field}",
    doi = "10.2307/2118619",
    journal = "Annals Math.",
    volume = "140",
    pages = "607--653",
    year = "1994"
}

@article{ab6b0194-3818-3102-ab03-e8d850e0918d,
 ISSN = {0003486X},
 URL = {http://www.jstor.org/stable/121023},
 author = {Demetrios Christodoulou},
 journal = {Annals of Mathematics},
 number = {1},
 pages = {183--217},
 publisher = {Annals of Mathematics},
 title = {The Instability of Naked Singularities in the Gravitational Collapse of a Scalar Field},
 urldate = {2024-03-22},
 volume = {149},
 year = {1999}
}

@article{Dafermos:2004ws,
    author = "Dafermos, Mihalis",
    title = "{A Note on naked singularities and the collapse of selfgravitating Higgs fields}",
    eprint = "gr-qc/0403033",
    archivePrefix = "arXiv",
    doi = "10.4310/ATMP.2005.v9.n4.a3",
    journal = "Adv. Theor. Math. Phys.",
    volume = "9",
    number = "4",
    pages = "575--591",
    year = "2005"
}

@article{Foster:1998sk,
    author = "Foster, Scott",
    title = "{Scalar field cosmologies and the initial space-time singularity}",
    eprint = "gr-qc/9806098",
    archivePrefix = "arXiv",
    reportNumber = "ADP-98-26-M66",
    doi = "10.1088/0264-9381/15/11/014",
    journal = "Class. Quant. Grav.",
    volume = "15",
    pages = "3485--3504",
    year = "1998"
}

@article{Miritzis:2003ym,
    author = "Miritzis, John",
    title = "{Scalar field cosmologies with an arbitrary potential}",
    eprint = "gr-qc/0303014",
    archivePrefix = "arXiv",
    doi = "10.1088/0264-9381/20/14/301",
    journal = "Class. Quant. Grav.",
    volume = "20",
    pages = "2981--2990",
    year = "2003"
}

@article{Miritzis:2003eu,
    author = "Miritzis, John",
    title = "{Dynamical system approach to FRW models in higher order gravity theories}",
    eprint = "gr-qc/0305062",
    archivePrefix = "arXiv",
    doi = "10.1063/1.1602161",
    journal = "J. Math. Phys.",
    volume = "44",
    pages = "3900--3910",
    year = "2003"
}

@article{Miritzis:2005hg,
    author = "Miritzis, John",
    title = "{The Recollapse problem of closed FRW models in higher-order gravity theories}",
    eprint = "gr-qc/0505139",
    archivePrefix = "arXiv",
    doi = "10.1063/1.2009648",
    journal = "J. Math. Phys.",
    volume = "46",
    pages = "082502",
    year = "2005"
}

@article{Tzanni:2014eja,
    author = "Tzanni, Koralia and Miritzis, John",
    title = "{Coupled quintessence with double exponential potentials}",
    eprint = "1403.6618",
    archivePrefix = "arXiv",
    primaryClass = "gr-qc",
    doi = "10.1103/PhysRevD.89.103540",
    journal = "Phys. Rev. D",
    volume = "89",
    number = "10",
    pages = "103540",
    year = "2014",
    note = "[Addendum: Phys.Rev.D 89, 129902 (2014)]"
}

@article{Giambo:2014jfa,
    author = "Giamb\`o, Roberto and Miritzis, John and Tzanni, Koralia",
    title = "{Negative potentials and collapsing universes}",
    eprint = "1411.0218",
    archivePrefix = "arXiv",
    primaryClass = "gr-qc",
    doi = "10.1088/0264-9381/32/3/035009",
    journal = "Class. Quant. Grav.",
    volume = "32",
    number = "3",
    pages = "035009",
    year = "2015"
}

@article{Giambo:2015tja,
    author = "Giamb\`o, Roberto and Miritzis, John and Tzanni, Koralia",
    title = "{Negative potentials and collapsing universes II}",
    eprint = "1506.08162",
    archivePrefix = "arXiv",
    primaryClass = "gr-qc",
    doi = "10.1088/0264-9381/32/16/165017",
    journal = "Class. Quant. Grav.",
    volume = "32",
    number = "16",
    pages = "165017",
    year = "2015"
}

@article{Malafarina:2017csn,
    author = "Malafarina, Daniele",
    title = "{Classical collapse to black holes and quantum bounces: A review}",
    eprint = "1703.04138",
    archivePrefix = "arXiv",
    primaryClass = "gr-qc",
    doi = "10.3390/universe3020048",
    journal = "Universe",
    volume = "3",
    number = "2",
    pages = "48",
    year = "2017"
}

@article{Harko:2013gha,
    author = "Harko, Tiberiu and Lobo, Francisco S. N. and Mak, M. K.",
    title = "{Arbitrary scalar field and quintessence cosmological models}",
    eprint = "1310.7167",
    archivePrefix = "arXiv",
    primaryClass = "gr-qc",
    doi = "10.1140/epjc/s10052-014-2784-8",
    journal = "Eur. Phys. J. C",
    volume = "74",
    pages = "2784",
    year = "2014"
}

@article{Ziaie:2021yox,
    author = "Ziaie, A. H. and Moradpour, H. and Mohammadi Sabet, M.",
    title = "{Non-singular collapse scenario from matter\textendash{}curvature coupling}",
    eprint = "2111.01525",
    archivePrefix = "arXiv",
    primaryClass = "physics.gen-ph",
    doi = "10.1140/epjp/s13360-021-02082-8",
    journal = "Eur. Phys. J. Plus",
    volume = "136",
    number = "10",
    pages = "1085",
    year = "2021"
}

@article{Tavakoli:2015llr,
    author = "Tavakoli, Yaser and Escamilla-Rivera, Celia and Fabris, J\'ulio C.",
    title = "{The final state of gravitational collapse in Eddington-inspired Born-Infeld theory}",
    eprint = "1512.05162",
    archivePrefix = "arXiv",
    primaryClass = "gr-qc",
    doi = "10.1002/andp.201600415",
    journal = "Annalen Phys.",
    volume = "529",
    number = "5",
    pages = "1600415",
    year = "2017"
}

@article{Astashenok:2018bol,
    author = "Astashenok, Artyom V. and Mosani, Karim and Odintsov, Sergey D. and Samanta, Gauranga C.",
    title = "{Gravitational collapse in General Relativity and in $R^2$-gravity: A comparative study}",
    eprint = "1812.10441",
    archivePrefix = "arXiv",
    primaryClass = "gr-qc",
    doi = "10.1142/S021988781950035X",
    journal = "Int. J. Geom. Meth. Mod. Phys.",
    volume = "16",
    number = "03",
    pages = "1950035",
    year = "2019"
}

@article{Koutsoumbas:2015ekk,
    author = "Koutsoumbas, George and Ntrekis, Konstantinos and Papantonopoulos, Eleftherios and Tsoukalas, Minas",
    title = "{Gravitational Collapse of a Homogeneous Scalar Field Coupled Kinematically to Einstein Tensor}",
    eprint = "1512.05934",
    archivePrefix = "arXiv",
    primaryClass = "gr-qc",
    doi = "10.1103/PhysRevD.95.044009",
    journal = "Phys. Rev. D",
    volume = "95",
    number = "4",
    pages = "044009",
    year = "2017"
}

@article{Chakrabarti:2020hrx,
    author = "Chakrabarti, Soumya",
    title = "{Collapse of an axion scalar field}",
    eprint = "2004.00244",
    archivePrefix = "arXiv",
    primaryClass = "gr-qc",
    doi = "10.1140/epjc/s10052-021-08930-2",
    journal = "Eur. Phys. J. C",
    volume = "81",
    number = "2",
    pages = "124",
    year = "2021"
}

@article{Leon:2020pfy,
    author = "Leon, Genly and Silva, Felipe Orlando Franz",
    title = "{Generalized scalar field cosmologies: theorems on asymptotic behavior}",
    eprint = "2007.11140",
    archivePrefix = "arXiv",
    primaryClass = "gr-qc",
    doi = "10.1088/1361-6382/abbd5a",
    journal = "Class. Quant. Grav.",
    volume = "37",
    number = "24",
    pages = "245005",
    year = "2020"
}

@article{Leon:2020ovw,
    author = "Leon, Genly and Silva, Felipe Orlando Franz",
    title = "{Generalized scalar field cosmologies: a global dynamical systems formulation}",
    eprint = "2007.11990",
    archivePrefix = "arXiv",
    primaryClass = "gr-qc",
    doi = "10.1088/1361-6382/abc095",
    journal = "Class. Quant. Grav.",
    volume = "38",
    number = "1",
    pages = "015004",
    year = "2021"
}

@article{Baier:2014ita,
    author = "Baier, R. and Nishimura, H. and Stricker, S. A.",
    title = "{Scalar field collapse with negative cosmological constant}",
    eprint = "1410.3495",
    archivePrefix = "arXiv",
    primaryClass = "gr-qc",
    doi = "10.1088/0264-9381/32/13/135021",
    journal = "Class. Quant. Grav.",
    volume = "32",
    number = "13",
    pages = "135021",
    year = "2015"
}

@article{Guo:2020ked,
    author = "Guo, Jun-Qi and Zhang, Lin and Chen, Yuewen and Joshi, Pankaj S. and Zhang, Hongsheng",
    title = "{Strength of the naked singularity in critical collapse}",
    eprint = "2011.06792",
    archivePrefix = "arXiv",
    primaryClass = "gr-qc",
    doi = "10.1140/epjc/s10052-020-08486-7",
    journal = "Eur. Phys. J. C",
    volume = "80",
    number = "10",
    pages = "924",
    year = "2020"
}

\end{document}